\newtheorem{open}{Open question}
\newcommand{\cL}{{\cal L}}
\newcommand{\cP}{{\cal P}}
\newcommand{\cO}{{\cal O}}
\newcommand{\dist}{\mbox{\rm dist}}
\newcommand{\ID}{\mbox{\sf ID}}
\newcommand{\prover}{\mathbf{p}}
\newcommand{\verif}{\mathbf{v}}
\newcommand{\accept}{\mbox{\sl accept}}
\newcommand{\reject}{\mbox{\sl reject}}
\newcommand{\MST}{\mbox{\sc mst}}
\newcommand{\ST}{\mbox{\sc ST}}
\newcommand{\LOCAL}{\mbox{\sc local}}
\title{Error-Sensitive Proof-Labeling Schemes\footnote{A preliminary version of this paper appeared in the proceedings of the 31st International Symposium on Distributed Computing (DISC), October 16-20, 2017, Vienna, Austria.}}
\author{Laurent Feuilloley}%
{Univ. Lyon, Université Lyon 1, LIRIS UMR CNRS 5205, F-69621, Lyon, France}%
{laurent.feuilloley@univ-lyon1.fr}%
{http://orcid.org/0000-0002-3994-0898}%
{}
\author{Pierre Fraigniaud}%
{Institut de Recherche en Informatique Fondamentale (IRIF), Universit\'e de Paris and CNRS, France}%
{pierre.fraigniaud@irif.fr}%
{}%
{}
\authorrunning{L. Feuilloley and P. Fraigniaud}
\keywords{Fault-tolerance, distributed decision, distributed property testing}
\begin{document}

\maketitle

\begin{abstract}

Proof-labeling schemes are known mechanisms providing nodes of networks with \emph{certificates} that can be \emph{verified} locally by distributed algorithms. Given a boolean predicate on network states, such schemes enable to check whether the predicate is satisfied by the actual state of the network, by having nodes interacting  with their neighbors only. Proof-labeling schemes are typically designed for enforcing fault-tolerance, by making sure that if the current state of the network is illegal with respect to some given predicate, then at least one node will detect it. Such a node can raise an alarm, or launch a recovery procedure enabling the system to return to a legal state. 

In this paper, we introduce \emph{error-sensitive} proof-labeling schemes. These are proof-labeling schemes which guarantee that the number of nodes detecting illegal states is linearly proportional to the Hamming distance between the current state and the set of legal states. By using  error-sensitive proof-labeling schemes, states which are far from satisfying the predicate will be detected by many nodes. We provide a structural characterization of the set of boolean predicates on network states for which there exist error-sensitive  proof-labeling schemes. This characterization allows us to show that classical predicates such as, e.g., cycle-freeness, and leader admit error-sensitive  proof-labeling schemes, while others like regular subgraphs do not. We also focus on \emph{compact} error-sensitive  proof-labeling schemes. 
In particular, we show that the known proof-labeling schemes for  spanning tree and minimum spanning tree, using certificates on $O(\log n)$ bits, and on $O\left(\log^2n\right)$ bits, respectively, are error-sensitive, as long as the trees are locally represented by adjacency lists, and not just by parent pointers.  

\end{abstract}

\thispagestyle{empty}
\newpage

\section{Introduction}

In the context of fault-tolerant distributed computing, it is desirable that the computing entities in the system be able to detect whether the system is in a legal state (w.r.t.~some boolean predicate, potentially expressed in various forms of logics) or not.  In the framework of distributed network computing, several mechanisms have been proposed to ensure such a detection (see, e.g., \cite{AfekD02,AfekKY97,AwerbuchPV91,%
BeauquierDDT07,KormanKP10}). Among them, \emph{proof-labeling schemes}~\cite{KormanKP10} are mechanisms enabling failure detection based on additional information provided to the nodes. More specifically, a proof-labeling scheme is composed of a \emph{prover}, and a \emph{verifier}. A prover is a non-trustable oracle that assigns a \emph{certificate} to each node of any given network, and a verifier is a distributed algorithm that locally checks  whether the collection of certificates is a \emph{distributed proof} that the network is in a legal state with respect to a given predicate -- by ``locally'', we essentially mean: by having each node interacting once with its neighbors. 

The prover is actually an abstraction. In practice, the certificates are provided by a distributed algorithm solving some task (see, e.g., \cite{AwerbuchKMPV93,BlinF15,KormanKP10}). For instance, let us consider spanning tree construction, where every node must compute a pointer to a neighboring node such that the collection of pointers form a tree spanning all nodes in the network. In that case, the algorithm in charge of constructing a spanning tree is also in charge of constructing the certificates providing a distributed proof allowing a verifier to check that proof locally. That is, the verifier must either accept or reject at every node, under the following constraints. If the constructed set of pointers form a spanning tree, then the constructed certificates must lead the verifier to accept at every node. Instead, if the constructed set of pointers does not form a spanning tree, then, for every possible certificate assignment to the nodes, at least one node must reject. The rejecting node may then raise an alarm, or launch a recovery procedure. Abstracting the construction of the certificates thanks to a prover enables to avoid delving into the implementation details relative to the distributed construction of the certificates, for focusing attention on whether such certificates exist, and on what should be their forms. The reader is referred to~\cite{BlinFP14} for more details about the connections between proof-labeling schemes and fault-tolerant computing. 

One weakness of proof-labeling schemes is that they may not allow the system running the verifier to distinguish between a global state which is slightly erroneous, and a global state which is completely bogus. In both cases, it is only required that at least one node detects the illegality of the state. In the latter case though, having only one node raising an alarm, or launching a recovery procedure for bringing  the whole system back to a legal state, might be quite inefficient. Instead, if many nodes would detect the errors, then bringing back the system into a legal state may be achieved by a collection of local resets running in parallel, instead of a single reset traversing the whole network sequentially. 

In this paper, we aim at designing \emph{error-sensitive} proof-labeling schemes, which guarantee that system states that are far from being correct can be detected by many nodes. 
More specifically, the distance between two global states of a distributed system is defined as the \emph{Hamming distance} between these two states, i.e., the minimum number of individual states that must be modified in order to move from one global state to the other. 
A proof-labeling scheme is \emph{error-sensitive} if there exists a constant $\alpha>0$ such that, for any erroneous system state $S$, the number of nodes detecting the error is at least $\alpha \, d(S)$, where $d(S)$ is the shortest Hamming distance between $S$ and a correct system state. 
The choice of a linear dependency between the number of nodes detecting the error, and the Hamming distance to legal states is not arbitrary, but motivated by the following two observations. 

\begin{itemize}
\item On the one hand, a linear dependency is somewhat the best that we may hope for in general. 
Indeed, let us consider a $k$-node network $G$ in some illegal state $S$ with $d(S)=d>0$, for which $f(d)$ nodes are detecting the illegality of~$S$, for some function~$f$. Think about vertex-coloring, in which one needs to modify the colors of at least $d$ nodes in order to get a proper coloring.  Then, let us make $n$ copies of $G$ and of its state $S$, potentially linked by $n-1$ additional edges if one insists on connectivity. In the resulting $kn$-node network~$G'$, at most $O(n\cdot f(d))$ nodes are detecting the non legality of the global state~$S'$ of~$G'$. However, $S'$ is typically at distance $\Omega(n\cdot d)$ from any legal state (think again about proper vertex-coloring). It follows that, essentially, $f(nd)\leq n \cdot f(d)$, that is, the number of nodes detecting an error cannot grow faster than linearly with the distance to the legal states. 

\item On the other hand, while a sub-linear dependency may still be useful in some contexts, this would be insufficient in others. 
For instance, let us consider the same construction as above, with $f(d)=d^\alpha$ for some $\alpha<1$. As $n$ grows to infinity, the ratio between the  number of nodes $f(nd)=(nd)^\alpha$  that are asked to detect errors in~$S'$ and the number of nodes $nk$ in the network~$G'$ goes to zero. This results in significantly decreasing the impact of having more than one node detecting the illegality of the current system state, as the number of nodes detecting errors becomes negligible anyway in front of the total number of nodes, even for scenarios in which the distance to legal states grows linearly with the total number of nodes. 
\end{itemize}

\subsection{Our results}

We consider boolean predicates on graphs with labeled nodes, as in, e.g., \cite{NaorS95}. Given a graph~$G$, a labeling of $G$ is a function $\ell:V(G)\to \{0,1\}^*$ assigning binary strings to nodes. A \emph{labeled graph} is a pair $(G,\ell)$ where $G$ is a graph, and $\ell$ is a labeling of $G$. Given a boolean predicate $\cP$ on labeled graphs, the \emph{distributed language} associated to $\cP$ is: 
\[
\cL=\{(G,\ell) \; \mbox{satisfying} \; \cP\}. 
\]
It is known that every (Turing decidable) distributed language admits a proof-labeling scheme~\cite{GoosS16,KormanKP10}. We show that the situation is radically different when one is interested in error-sensitive proof-labeling schemes. 
In particular, not all distributed languages admit an error-sensitive proof-labeling scheme. 
Moreover, the existence of error-sensitive proof-labeling schemes for the solution of a distributed task is very much impacted by the way the solution is encoded. 
For instance, in the case of spanning tree construction, we show that asking every node to produce a single pointer to its parent in the tree  cannot be certified in an error-sensitive manner, while asking every node to produce the list of its neighbors in the tree can  be certified in an error-sensitive manner. 

Our first main result is a structural characterization of the distributed languages for which there exist error-sensitive  proof-labeling schemes. 
Namely, a distributed language admits an error-sensitive proof-labeling scheme if and only if it is \emph{locally stable}. The notion of local stability is purely structural. 
Roughly, a distributed language $\cL$ is locally stable if a labeling $\ell$ resulting from copy-pasting parts of correct labelings to different subsets $S_1,\dots,S_k$ of nodes in a graph $G$ results in a labeled graph $(G,\ell)$  that is not too far from being legal. 
Here ``not too far'' means that the Hamming distance between $(G,\ell)$ and $\cL$ is proportional to the size of the boundary of the subsets $S_1,\dots,S_k$ in $G$, and not to the size of these subsets.
For the sake of concreteness, let us give an intuition about why 
a spanning tree encoded by a list of neighbors is a locally stable language. 
Consider a graph partitioned into $k$ connected induced subgraphs such that only a small fraction of the nodes are on the boundary of a subgraph (i.e., are having a neighboring node in another subgraph).
Now, let us consider a spanning tree in each of the $k$ subgraphs.  
The union of these spanning trees is not a spanning tree, but it is not far from being a spanning tree. 
Indeed, it is acyclic, and we can simply add edges to make it connected. To do so, we only modify the adjacency list of vertices that are on the boundaries, thus the distance between the original instance and the modified one is smaller than the sum of the sizes of the boundaries. (This example is actually simplified, as it assumes that the trees in each component are correct, i.e., connected, and without cycles, which may not be the case.)
Our characterization allows us to show that important distributed languages (e.g., acyclicity, leader, etc.)  admit error-sensitive  proof-labeling schemes, while some very basic distributed languages (e.g., regular subgraph, etc.) do not admit  error-sensitive  proof-labeling schemes. 

Unfortunately, the error-sensitive schemes constructed for locally stable languages in the proof of our characterization result are not efficient in terms of certificate size. 
We investigate the question of whether it is possible to get error-sensitivity with small certificates.
For this purpose, we focus on two essential languages: spanning tree, which is a building block for many proof-labeling schemes, and minimum spanning tree, which is arguably one of the most important problems in distributed network computing.

We show that the known space-optimal proof-labeling schemes for spanning tree with $O(\log n)$-bit certificates, and for minimum spanning tree (MST) with $O(\log^2 n)$-bit certificates, are both error-sensitive, whenever the trees are encoded at each node by an adjacency list (and not by a single pointer to the parent). Hence, error-sensitivity comes at no cost for spanning tree and MST. 
Proving this result requires to establish some kind of matching between the erroneously labeled nodes and the rejecting nodes. 
Establishing this matching is difficult because, for  both spanning tree and MST, the rejecting nodes might be located far away from the erroneous nodes. Indeed, the presence of certificates helps local detection of errors, but decorrelates the nodes at which the alarms take place from the nodes at which the errors take place.
For example, in an erroneous spanning tree that is disconnected, it may be the case that only one node is detecting the error, and that this node is far from a place where disconnection can be fixed by adding an edge.
(See Section~\ref{sec:conclusion} for a discussion about \emph{proximity-sensitive} proof-labeling schemes). 
In the case of MST, the space-optimal proof-labeling schemes uses $\Theta(\log n)$ independent layers of certification, and this a challenge for error-sensitivity. 
Indeed, because detection and correction could happen in different places, the following scenario cannot be ruled out directly. It could be the case that: 
(1) every layer of certification is broken, but
(2) only one vertex rejects (because all the $\Theta(\log n)$ parallel verifications reject on the same vertex), and 
(3) to fix the instance, we would need to modify the input of $\Theta(\log n)$ different vertices. 
In short, we could have one vertex rejecting but distance $\Theta(\log n)$, which would prevent error-sensitivity.
Our result demonstrates that this situation cannot appear.

\subsection{Related work}

As mentioned before, one important motivation for our work is fault-tolerant distributed computing, with the help of failure detection mechanisms such as proof-labeling schemes. Proof-labeling schemes were introduced in~\cite{KormanKP10}. A tight bound of $\Theta(\log^2n)$ bits on the size of the certificates for certifying MST was established in \cite{KorKP11,KormanK07}. Several variants of proof-labeling schemes have been investigated in the literature, including verification at  distance greater than one~\cite{GoosS16}, and the design of proofs with identity-oblivious certificates~\cite{FraigniaudKP13}. Connections between proof-labeling schemes and the design of distributed (silent) self-stabilizing algorithms were studied in~\cite{BlinFP14}. Extensions of proof-labeling schemes for the design of (non-silent) self-stabilizing algorithms were investigated in~\cite{KormanKM15}. In all these work, the number of nodes susceptible to detect an incorrect configuration is not considered, and the only constraint imposed on the error-detection mechanism is that an erroneous configuration must be detected by at least one node. Our work requires the number of nodes detecting an erroneous configuration to grow linearly with the number of errors. As mentioned earlier, having several nodes detecting an error allows to launch a reset from several nodes at once. See \cite{BoulinierPV04, DevismesJ19} for references on such collaborative resets. 
Note that taking into account how far from a correct configuration the network is, is not a new idea. Indeed there is a literature on fault-containment or fault-locality (see, e.g., \cite{GhoshGHP07, KuttenP99}), where the focus is on having correction algorithms that use little resources if there are just a few faults, or if these fault are grouped together somehow. 
In particular, \cite{GhoshGHP07} defines a notion of ``small-scale'' faults, for which the system can converge to a correct solution without modifying the states of the nodes that are far from the faulty nodes. 
Our work has a different objective, that is, making sure that incorrect global states resulting from many incorrect local states must be detected by many nodes, while incorrect global states resulting from just a few incorrect local states may be detected by few nodes only.

A line of work closely related to this paper is \emph{property testing}. Centralized property testing for graph properties was investigated in numerous papers (see~\cite{Goldreich11p,Goldreich11q} for an introduction to the topic). Distributed property testing has been introduced in~\cite{BrakerskiP11}, and formalized in~\cite{Censor-HillelFS19} (see also, e.g., ~\cite{EvenFFGLMMOORT17,FraigniaudRST16}). In both centralized and decentralized property testing, the decision regarding whether the labeled input graph satisfies a given property (e.g., cycle-freeness) is typically relaxed: if the graph satisfies the property then all centralized queries, or all nodes must accept, and if the graph is far from satisfying the property (e.g., it contains many cycles), then at least one centralized query, or at least one node must reject. The notion of ``far'' depends on the context. The one adopted in the distributed setting is defined by the sparse model, specifying that a graph is $\epsilon$-far from satisfying a property if any modification up to a fraction~$\epsilon$ of the edges results in a graph that is still not satisfying the property. The goal is then to distinguish  graphs that satisfy the property from graphs that are $\epsilon$-far from satisfying the property. In some sense, property testing can be viewed as efficiently approximating the solution of a hard problem (e.g., NP-hard), while proof-labeling schemes can be viewed as establishing that the problem is complete (e.g., NP-complete). Centralized property testing was actually extended to a non-deterministic setting~\cite{GurR18,LovaszV13} in which the centralized algorithm is provided with a centralized certificate. In error-sensitive proof-labeling schemes, we try to get the best of both worlds, that is, if the input graph is far from satisfying the property, then, whatever are the certificates provided to the nodes by the prover, a large number of nodes must reject the instance. The farness notion used in distributed property testing refers to the edges, while we use a farness notion related to the nodes, but the two notions are essentially the same in bounded-degree graphs. 

From a higher perspective, our approach aims at closing the gap between local distributed computing and centralized computing in networks, by studying distributed error-detection mechanisms that perform locally, but generate individual outputs that are related to the global correctness of the system at hand.  As such, it is worth mentioning other efforts in the same direction, including especially work in the context of centralized local computing, like, e.g.,~\cite{EvenMR18,GoosHLMS16,ParnasR07}. Finally, distributed property testing and proof-labeling schemes are different forms of distributed decision mechanisms (e.g., distributed interactive proofs~\cite{KolOS18,NaorPY20}), which have been investigated under various models for distributed computing. We refer to \cite{FeuilloleyF16} for a survey on distributed decision, and to \cite{Feuilloley21} for a more introductory text.

\section{Model and definitions}

Throughout the paper, all graphs are assumed to be connected and simple (no self-loops, and no parallel edges). Given a node $v$ of a graph $G$, we denote by $N(v)$ the open neighborhood of $v$, i.e., the set of neighbors of $v$ in $G$. In some contexts (as, e.g., MST), the considered graphs may be edge-weighted.  

All results in this paper are stated in the classical \LOCAL\/ model~\cite{Peleg00} for distributed network computing, where networks are modeled by undirected graphs whose nodes model the computing entities, and edges model the communication links. Recall that the  \LOCAL\/ model assumes that nodes are given distinct identities (a.k.a.~IDs), and that computation proceeds in synchronous rounds. All nodes simultaneously start executing the given algorithm. At each round, nodes exchange messages with their neighbors, and perform individual computation. There are no limits placed on the message size, nor on the amount of computation performed at each round. Specifically, we are  interested in \emph{proof-labeling schemes}~\cite{KormanKP10}, which are  well established mechanisms enabling to locally detect inconsistencies in the global states of networks with respect to some given boolean predicate. Such mechanisms involve a verification algorithm which performs in just a single round in the \LOCAL\/ model. In order to recall the definition of proof-labeling schemes, we first recall the definition of \emph{distributed languages}~\cite{FraigniaudKP13}. 

A distributed language is a collection of labeled graphs, that is, a set $\cL$ of pairs $(G,\ell)$ where $G$ is a graph,  and $\ell:V(G)\to\{0,1\}^*$ is a labeling function assigning a binary string to each node of $G$. Such a labeling may encode just a boolean (e.g., whether the node is in a dominating set or not), or an integer (e.g., in graph coloring), or a collection of neighboring IDs (e.g., for locally encoding a subgraph). 
In the latter case, or whenever $\ell$ encodes a set of nodes at each vertex, we may slightly abuse notation by viewing $\ell(v)$ as an actual set of nodes, i.e., by considering $\ell(v)\subseteq V(G)$. 
A distributed language is said to be \emph{constructible} if, for every graph $G$, there exists~$\ell$ such that $(G,\ell)\in\cL$. It is  \emph{Turing decidable} if there exists a (centralized) algorithm which, given $(G,\ell)$ returns whether $(G,\ell)\in \cL$ or not. All distributed languages considered in this paper are always assumed to be constructible and Turing decidable. 

Given a distributed language $\cL$, a proof-labeling scheme for $\cL$ is a prover-verifier pair $(\prover,\verif)$, where $\prover$ is an oracle assigning a certificate function $c:V(G)\to\{0,1\}^*$ to every labeled graph $(G,\ell)\in \cL$, and $\verif$ is a 1-round distributed algorithm\footnote{That is, every node outputs after having communicated with all its neighbors only once.} taking as input at each node $v$ its identity $\ID(v)$, its label $\ell(v)$, and its certificate $c(v)$, such that, for every labeled graph $(G,\ell)$ the following two conditions are satisfied: 

\begin{itemize}
\item If $(G,\ell)\in \cL$ then $\verif$ outputs \accept\/ at every node of $G$ whenever all nodes of $G$ are given the certificates provided by $\prover$;
\item If $(G,\ell)\notin \cL$ then, for every certificate function $c:V(G)\to\{0,1\}^*$, $\verif$ outputs \reject\/ in at least one node of $G$. 
\end{itemize}

The first condition guarantees the existence of certificates allowing the given legally labeled graph $(G,\ell)$ to be globally accepted. The second condition guarantees that the verifier cannot be ``cheated'', that is, an illegally labeled graph will systematically be rejected by at least one node, whatever  ``fake'' certificates are given to the nodes.  It is known that every distributed language has a proof-labeling scheme~\cite{KormanKP10}. 

To define the novel notion of \emph{error-sensitive} proof-labeling schemes, we introduce the following notion of distance between labeled graphs. Let $\ell$ and $\ell'$ be two labelings of a same graph~$G$. 
The \emph{Hamming distance} between $(G,\ell)$ and $(G,\ell')$ is the minimum number of elementary operations required to transform $(G,\ell)$ into $(G,\ell')$, where an elementary operation consists of replacing a node label by another label. That is, the Hamming distance between  $(G,\ell)$ and $(G,\ell')$ is simply
\[
|\{v\in V(G) : \ell(v)\neq \ell'(v)\}|. 
\]
The Hamming distance from a labeled graph $(G,\ell)$ to a language $\cL$ is the minimum, taken over all labelings $\ell'$ of $G$ satisfying $(G,\ell')\in \cL$, of the Hamming distance between $(G,\ell)$ and $(G,\ell')$. 
Note that ``Hamming distance'' is usually defined for words of equal length, by counting the number of characters that must be changed for moving from one word to another word. We use the same terminology in this paper as our distance measures the minimum number of nodes whose states have to be modified to transform a given global state $(G,\ell)$ into another global state $(G,\ell')$. (Instead, distance such as the Edit distance would rather refer to the numbers of edges to be added or deleted for transforming one graph into another.)

Roughly, an error-sensitive proof-labeling scheme satisfies that  the number of nodes that reject a labeled graph $(G,\ell)$ should be (at least) proportional to the distance between $(G,\ell)$ and the considered language. 

\begin{definition}\label{def:error-sensitivity}
A proof-labeling scheme $(\prover,\verif)$ for a language $\cL$ is \emph{error-sensitive} if there exists a constant $\alpha>0$, such that, for every labeled graph $(G,\ell)$, 
\begin{itemize}
\item If $(G,\ell)\in \cL$ then $\verif$ outputs \accept\/ at every node of $G$ whenever all nodes of $G$ are given the certificates provided by $\prover$;\item If $(G,\ell)\notin \cL$ then, for every certificate function $c:V(G)\to\{0,1\}^*$, $\verif$ outputs \reject\/ in at least $\alpha \,\mathsf{d}$ nodes of $G$, where $\mathsf{d}$ is the Hamming distance between $(G,\ell)$ and $\cL$, i.e., $\mathsf{d}=\dist\big((G,\ell),\cL\big)$. 
\end{itemize}
\end{definition}

Note that the nodes rejecting a labeled graph $(G,\ell)$ do not need to be the same for all certificate assignments.
Also note that, as far as this first study of the notion of error-sensitivity is concerned, we are mostly interested in the existence of some constant $\alpha=\Theta(1)$, and not much in the exact value of $\alpha$. However, it is worth keeping in mind that the larger~$\alpha$, the better the error-detection mechanism is, i.e., it is desirable to design protocol for which~$\alpha$ is large. 
For this paper, our focus is a first attempt to explore the notion of error-sensitivity, thus we have not tried  not optimize the constants. Nevertheless, we shall explicitly state what values for the sensitivity~$\alpha$ were used for establishing each of our theorems.

\section{Basic properties of error-sensitive proof-labeling schemes}
\label{sec:warmup}

In this section, we explore basic properties of error-sentivity. First, we show that some  proof-labeling schemes are error-sensitive (Theorem~\ref{prop:acyclic}), but that some other  proof-labeling schemes are not error-sensitive (Theorem~\ref{prop:bad-pls}). More precisely, Theorem~\ref{prop:bad-pls}  shows that even if a language has an error-sensitive proof-labeling scheme, not all proof-labeling schemes for that language have this property.
Second, we show that if a language has an error-sensitive proof-labeling scheme, then the so-called \emph{universal scheme} also has this property (Lemma~\ref{lem:univ}). This implies that for checking whether there exists an error-sensitive scheme for a given language, we can just check whether the universal scheme for that language is error-sensitive. We use this fact for proving that there exist languages that do not have error-sensitive proof-labeling schemes (Theorems~\ref{prop:notallwork} and~\ref{prop:STnot}).

Let us first illustrate the notion of error-sensitive proof-labeling scheme by exemplifying its design for a classic example of distributed languages. Let {\sc acyclic} be the following distributed language (which is a mere relaxation of spanning tree):
\[
\mbox{\sc acyclic} 
	=\Big\{(G,\ell) : \forall v\in V(G), \, \ell(v)\in N(v)\cup \{\bot\}, \; 
	   \mbox{and} \bigcup_{v\in V(G) \, : \,\ell(v)\neq \bot}(v,\ell(v)) \; \mbox{is acyclic} \Big\}
\]
That is, the label of a node is interpreted as a pointer to some neighboring node, or to null. Then $(G,\ell) \in \mbox{\sc acyclic}$ if the subgraph of $G$ described by the set of non-null pointers is acyclic. We show that {\sc acyclic} has an error-sensitive proof-labeling scheme. The proof of this result is easy, as fixing of the labels can be done locally, at the rejecting nodes. Nevertheless, its proof serves as a basic example illustrating the notion of error-sensitive proof-labeling scheme.

\begin{theorem}\label{prop:acyclic}
{\sc acyclic} has an error-sensitive proof-labeling scheme, with sensitivity 1.
\end{theorem}

\begin{proof}
Let $(G,\ell) \in \mbox{\sc acyclic}$. Every node $v\in V(G)$ belongs to an in-tree rooted at a node $r$ such that $\ell(r)=\bot$.  The prover $\prover$ provides every node $v$ with its distance  $d(v)$ to the root of its in-tree (i.e., number of hops to reach the root by following the pointers specified by $\ell$). The verifier $\verif$ proceeds at every node $v$ as follows: first, it checks that $\ell(v)\in N(v)\cup \{\bot\}$; second, it checks that, if $\ell(v)\neq \bot$ then $d(\ell(v))=d(v)-1$, and if $\ell(v)= \bot$ then $d(v)=0$. If all these tests are passed, then $v$ accepts. Otherwise, it rejects. By construction, if $(G,\ell)$ is acyclic, then all nodes accept with these certificates. Conversely, if there is a cycle $C$ in $(G,\ell)$, then let $v$ be a node with maximum value $d(v)$ in $C$. 
Its predecessor in $C$ (i.e., the node $u\in C$ with $\ell(u)=v$) rejects. So $(\prover,\verif)$ is a proof-labeling scheme for \mbox{\sc acyclic}. We show that $(\prover,\verif)$ is error-sensitive. Suppose that $\verif$ rejects $(G,\ell)$ at $k\geq 1$ nodes. Let us replace the label $\ell(v)$ of each rejecting node $v$ by the label $\ell'(v)=\bot$, and keep the labels of all other nodes unchanged, i.e., $\ell'(v)=\ell(v)$ for every node where $\verif$ accepts. We have $(G,\ell')\in \mbox{\sc acyclic}$. Indeed, by construction, the label of every node $u$ in $(G,\ell')$ has a well-formatted label $\ell'(v)\in N(v)\cup\{\bot\}$. Moreover, let us assume, for the purpose of contradiction, that there is a cycle $C$ in $(G,\ell')$. By definition, every node $v$ of this cycle is pointing to  $\ell'(v)\in N(v)$. Thus $\ell'(v)=\ell(v)$ for every node of $C$, from which it follows that no nodes of $C$ was rejecting with $\ell$, a contradiction with the fact that, as observed before, $\verif$ rejects every cycle. Therefore $(G,\ell')\in \mbox{\sc acyclic}$. Hence the Hamming distance between $(G,\ell)$ and {\sc acyclic}  is at most $k$. It follows that $(\prover,\verif)$  is error-sensitive, with sensitivity parameter $\alpha\geq 1$. 
\end{proof}

The definition of error-sensitiveness is based on the existence of a proof-labeling scheme for the considered language. However, two different proof-labeling schemes for the same language may have different sensitivity parameters $\alpha$. In fact, we show that every non-trivial language admits a proof-labeling schemes which is \emph{not} error-sensitive. That is, the following result shows that demonstrating  the existence of a proof-labeling scheme that is \emph{not} error-sensitive for a language does not prevent that language to have another proof-labeling scheme which \emph{is} error-sensitive. We say that a distributed language is \emph{trivially approximable} if there exists a constant $d$ such that every labeled graph $(G,\ell)$ is at Hamming distance at most $d$ from $\cL$. 

\begin{theorem}\label{prop:bad-pls}
Let $\cL$ be a distributed language. Unless $\cL$ is trivially approximable, there exists a proof-labeling scheme for $\cL$ that is \emph{not} error-sensitive.  
\end{theorem}

\begin{proof}
Let $\cL$ be a non trivially approximable distributed language. Given a labeled graph $(G,\ell)\in \cL$, let $T$ be a spanning tree of $G$. It is folklore (cf., e.g., \cite{AwerbuchPV91,KormanKP10}) that  $T$ can be certified by a proof-labeling scheme where the certificate assigned to each node $u$ consists of a pair $(I(u),d(u))$ where $I(u)$ is the ID of a node $r$ picked as the root of $T$, and $d(u)$ the hop-distance in $T$ from $u$ to $r$. The verifier checks the distances the same way as it does in the proof of Theorem~\ref{prop:acyclic} (which guarantees the absence of cycles). In addition, every node checks that it agrees with its neighbors in the graph about the ID of the root (which guarantees that $T$ is not a forest with more than one tree).  At every node, if all these tests are passed at that node, then it accepts, else it rejects. 

We now prove that every proof-labeling scheme $(\prover,\verif)$ for $\cL$ can be transformed into a proof-labeling scheme $(\prover',\verif')$  for $\cL$ which is not error-sensitive. 
On a legal instance $(G,\ell)\in \cL$, the prover $\prover'$ selects a spanning tree $T$ of $G$, and provides every node $u$ with:
\begin{enumerate}
\item the certificate that the prover $\prover$ would assign to $u$ for $(G,\ell)$, denoted by~$\prover(u)$;
\item the local description of the tree $T$, together with the corresponding certificate;
\item a boolean $b(u)$, set to $true$.
\end{enumerate} 

The verifier $\verif'$ checks the correctness of the spanning tree $T$, and rejects if it is not correct. From now on, we assume that $T$ is correct.  The verifier $\verif'$ then outputs accept or reject according to the following rules. 

\begin{enumerate}
\item At every node $u$ distinct from the root of $T$, $\verif'$ accepts if and only if one of the two conditions below is fulfilled: 
\begin{enumerate}
\item $b(u)=false$, and either $\verif$ rejects at $u$, or a child $v$ of $u$ in $T$ satisfies $b(v)=false$;
\item $b(u)=true$, $\verif$ accepts at $u$, and $b(v)=true$ for every child $v$ of $u$ in $T$. 
\end{enumerate}
\item At the root of $T$, the verifier $\verif'$ rejects if and only if 
\begin{enumerate}
\item $\verif$ rejects, or a child $v$ of $u$ satisfies $b(v)=false$. 
\end{enumerate}
\end{enumerate}

By construction, if $(G,\ell)\in \cL$ then $\verif'$ accepts at all nodes, when provided with the appropriate certificates, because, with these certificates, all booleans $b$ are $true$, and $\verif$ accepts at all nodes.

 If $(G,\ell)\notin \cL$, then $\verif'$ rejects in at least one node if the given certificates do not encode a spanning tree $T$. Therefore, let us assume that the given certificates correctly encode a spanning tree $T$, rooted at $r$. Since $(G,\ell)\notin \cL$, there exists at least one node where $\verif$ rejects. Let $u$ be a node where $\verif$ rejects, such that $\verif$ rejects at no other nodes on the shortest path from $u$ to $r$ in $T$. If $u=r$, then, since $\verif$ rejects, we get that $\verif'$ rejects as well. So, let us assume that $u\neq r$. Let $u_0,u_1,\dots,u_t$ with $u_0=u$, $t\geq 1$, and $u_t=r$ be the shortest path from $u$ to $r$ in $T$. For $\verif'$ to accept at $u_0$, it must be the case that $b(u)=false$. The same holds at each node along the path: For $\verif'$ to accept at $u_i$, $i=0,\dots,t-1$, it must be the case that $b(u_i)=false$. This leads $\verif'$ to reject at $u_t=r$. Therefore, $(\prover',\verif')$ is a proof-labeling scheme for $\cL$. 

We now show that $(\prover',\verif')$ is not error-sensitive. Let $(G,\ell)\notin \cL$.  Let $T$ be a spanning tree of $G$, rooted at node $r$. We provide the nodes with the proper description of $T$ and the certificates to certify $T$. We also provide the nodes with arbitrary certificates for $\verif$. Then we provide the nodes with the following ``fake'' boolean certificates that we assign by visiting the nodes of the tree $T$ bottom-up, as follows. Let $u$ be a node:  

\begin{enumerate}
\item if $\verif$ rejects at $u$ or a child $v$ of $u$ in $T$ satisfies $b(v)=false$, then set $b(u)=false$; 
\item else set $b(u)=true$.  
\end{enumerate}

In this way, only the root of $T$ can reject. Therefore, with such certificates, even instances $(G,\ell)$ that are arbitrarily far from $\cL$ will be rejected by a single node. It follows that $(\prover',\verif')$ is not error-sensitive, as claimed. 
\end{proof}

Recall that the fact that every distributed language has a proof-labeling scheme can be established by using a \emph{universal} proof-labeling scheme $(\prover_{univ},\verif_{univ})$ (see \cite{GoosS16}). 
Given a distributed language $\cL$,
the universal proof-labeling scheme is defined as follows. 
On a legal instance $(G,\ell)\in \cL$, where $G$ has $n$ vertices, the prover assigns a certificate $c(u)=(T,M,L)$ to every node $u$. 
Specifically, the prover orders the vertices from $1$ to $n$ arbitrarily, and $T$ is a vector with $n$~entries indexed from~1 to~$n$ where $T[i]$ is the ID of the $i$-th node $u$.
Then, $L[i]$ is the label $\ell(v)$ of the $i$-th node $u$.
Finally, $M$ is the adjacency matrix of~$G$, where the $i$-th raw (and $i$-th column) corresponds to the $i$-th vertex in $T$. 
The prover $\prover_{univ}$ assigns $c(u)$ to every node $u\in V(G)$. The verifier $\verif_{univ}$ then checks at every node $u$ that its certificate is consistent with the certificates given to its neighbors (i.e., they all have the same $T$, $L$, and $M$, the indexes match with the IDs, and the actual neighborhood of $v$ is as it is specified in $T$, $L$ and $M$). 
If this test is not passed, then $\verif_{univ}$ outputs \emph{reject} at~$u$, otherwise it outputs \emph{accept} or \emph{reject} according to whether the labeled graph described by $(M,L)$ is in $\cL$ or not. It is easy to check that $(\prover_{univ},\verif_{univ})$ is indeed a proof-labeling scheme for $\cL$.  

The universal scheme uses large certificates, of size $O(n (\log n+\max_v|\ell_v|) + n^2)$. We are interested in the design of  proof-labeling schemes using significantly smaller certificates.

The universal proof-labeling scheme has the following nice property, that we state as a lemma for further references in the text.  

\begin{lemma}\label{lem:univ}
If a distributed language $\cL$ has an error-sensitive proof-labeling scheme, then the universal proof-labeling scheme applied to $\cL$ is error-sensitive. 
\end{lemma}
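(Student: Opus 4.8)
The plan is to show that any adversarial certificate for the universal scheme can be converted, without increasing the number of rejecting nodes, into a certificate for a fixed error-sensitive scheme for $\cL$; the error-sensitivity of the latter then forces many rejections for the former. So I would fix an error-sensitive proof-labeling scheme $(\prover,\verif)$ for $\cL$, with sensitivity constant $\alpha>0$, take an arbitrary illegal instance $(G,\ell)\notin\cL$ and an arbitrary certificate function $c$ for $\verif_{univ}$, and let $R$ be the set of nodes rejecting under $(\verif_{univ},c)$. Writing $d=\dist((G,\ell),\cL)$, the goal is to prove $|R|\ge \alpha\, d$, since $(G,\ell)$ and $c$ are arbitrary.

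First I would record precisely what an accepting node of $\verif_{univ}$ guarantees. If $\verif_{univ}$ accepts at $u$, then by construction $c(u)=(T,M,L)$ agrees with $c(w)$ for every neighbor $w$ of $u$, the pair $H_u:=(M,L)$ lies in $\cL$, and the local consistency checks pass: writing $i_v$ for the index of a node $v$ (the position with $T[i_v]=\ID(v)$), we have $T[i_u]=\ID(u)$ and $L[i_u]=\ell(u)$, and for every neighbor $w\in N(u)$ we have $T[i_w]=\ID(w)$, $L[i_w]=\ell(w)$, and $M[i_u,i_w]=1$, with $\{i_w : w\in N(u)\}$ being exactly the set of $H_u$-neighbors of the vertex $i_u$. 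In other words, the whole $1$-hop view of $u$ in $(G,\ell)$ — identities, labels, and adjacencies — is identical to the $1$-hop view of vertex $i_u$ in the legal labeled graph $H_u$.

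Next I would build a certificate $\hat c$ for $\verif$. For every node $v$ whose triple $c(v)=(T,M,L)$ satisfies $(M,L)\in\cL$ and $T[i_v]=\ID(v)$, I set $\hat c(v)$ to be the honest certificate that $\prover$ assigns to the vertex $i_v$ of $H:=(M,L)$, namely $\prover(H)[i_v]$; on all remaining nodes I set $\hat c(v)$ arbitrarily. This assignment is purely local (it depends only on $c(v)$ and $\ID(v)$), yet it reconstructs, around each accepting node, the honest certificate of the legal instance that node believes it lives in. I then claim that $\verif$ equipped with $\hat c$ accepts at every node $u$ where $\verif_{univ}$ accepts. Indeed, let $u$ be such a node, with $H_u=(M,L)\in\cL$ as above. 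Every neighbor $w$ of $u$ carries the same triple $(T,M,L)$ and, by $u$'s consistency check, satisfies $T[i_w]=\ID(w)$; hence the rule assigns $\hat c(w)=\prover(H_u)[i_w]$ to each such $w$ — even to those neighbors at which $\verif_{univ}$ itself rejects. Consequently the $1$-hop view of $u$ under $(\ell,\hat c,\ID)$ in $G$ coincides exactly with the $1$-hop view of vertex $i_u$ under $(\,\cdot\,,\prover(H_u),\,\cdot\,)$ in the honest execution of $(\prover,\verif)$ on $H_u$. Since $H_u\in\cL$, that honest execution accepts at every vertex, in particular at $i_u$; because $\verif$ is a one-round algorithm whose output depends only on this view, $\verif$ accepts at $u$.

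It follows that the set of nodes rejecting under $(\verif,\hat c)$ is contained in $R$. Applying the error-sensitivity of $(\prover,\verif)$ to the illegal instance $(G,\ell)$ with the certificate function $\hat c$, at least $\alpha\, d$ nodes reject, whence $|R|\ge \alpha\, d$; this proves that $(\prover_{univ},\verif_{univ})$ is error-sensitive, with the same constant $\alpha$. The step I expect to require the most care is the verification of the key claim — specifically, making sure that the certificate $\hat c$ prescribed around an accepting node $u$ matches the honest certificate of $H_u$ at \emph{all} of $u$'s neighbors, including neighbors that are themselves rejecting. This is exactly where one uses that $\verif_{univ}$-acceptance at $u$ certifies the indices, the labels, and hence the honest-certificate assignment of the entire closed neighborhood of $u$, and not merely of $u$ itself.
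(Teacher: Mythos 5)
Your proof is correct and follows essentially the same route as the paper: both arguments convert an arbitrary universal certificate into a certificate for the fixed error-sensitive scheme $(\prover,\verif)$ by giving each suitable node the honest certificate that $\prover$ assigns to its index in the legal instance $(M,L)$ encoded in its triple, then show that every $\verif_{univ}$-accepting node remains $\verif$-accepting (so the rejecting set can only shrink), and transfer error-sensitivity with the same constant $\alpha$. The only difference is presentational: the paper defines the new certificate at rejecting nodes via an accepting neighbor's triple (and must argue well-definedness), whereas you define it from each node's own triple and verify it agrees with what accepting neighbors need --- the same idea, slightly repackaged.
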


\begin{proof}
Let $(\prover,\verif)$ be an error-sensitive proof-labeling scheme for $\cL$, and let $(\prover_{univ},\verif_{univ})$ be the universal proof-labeling scheme for $\cL$. Let $(G,\ell)\notin \cL$. We show that $(\prover_{univ},\verif_{univ})$ is at least as good as $(\prover,\verif)$ with respect to the number of rejecting nodes. Specifically, we show that if $\verif_{univ}$ rejects $(G,\ell)$ at $r$ nodes for some certificate function $c$, then there exists a certificate function $c'$ such that $\verif$ rejects $(G,\ell)$ in at most $r$ nodes. 
We now describe how to construct the certificate assignment $c'$, on $(G,\ell)$.
Given any node~$u$, the definition of the certificate $c'(u)$ depends on the behavior of $u$  and its neighbors in the universal scheme with certificate $c$. 

\begin{itemize}
\item The first case is when $\verif_{univ}$ accepts at $u$, with certificate $c(u)=(T,M,L)$.
Let $G_{M}$ be the graph described by~$M$, and $\ell_{L}$ be the labeling of $G_{M}$ that correspond to $L$.
Since $u$ accepts, we have $(G_{M},\ell_{L})$ is in $\cL$.
The certificate $c'(u)$ is the certificate that the prover $\prover$ would assign to~$u$ if it were in $(G_{M},\ell_{L})$.

\item The second case is when: 
(1) $\verif_{univ}$ rejects $(G,\ell)$ at $u$, and 
(2) $u$ is adjacent to at least one node $v$ at which $\verif_{univ}$ accepts $(G,\ell)$. 
Note that this situation can occur only under special circumstances. 
The fact that $v$ accepts means that $u$ and $v$ were given the same triplet $(T,M,L)$, and that this triplet corresponds to a correct instance of the language. 
Therefore, the fact that $u$~rejects can only come from the fact that its neighborhood does not match the description of this neighborhood in $(T,M,L)$.
As before, we set $c'(u)$ as the certificate assigned to node $u$ by $\prover$ in the labeled graph $(G_{M},\ell_{L})$. Note that if $u$ is adjacent to two different nodes $v$ and $v'$ at which $\verif_{univ}$ accepts, then these two nodes $v$ and $v'$ share the same certificates $(T,M,L)$. Hence the definition of $c'$ at $u$ is well defined. 

\item The third case is when none of the  previous two cases apply. In this case $c'(u)$ is set to~$\varnothing$.
\end{itemize}

Let us now consider the behavior of $\verif$ on $(G,\ell) $ with certificates $c'$.
We observe that for a node $u$ in which $\verif_{univ}$ accepts, its certificate $c(u)$ is consistent with the certificates of all its neighbors, and thus, in particular, $u$ and its neighbors share the same labeled graph representation $(M,L)$. Therefore, the certificates $c'$ assigned to $u$ and its neighbors are consistent with respect to $\verif$. It follows that every node $u$ at which $\verif_{univ}$ accepts $(G,\ell)$ with certificate function $c$ satisfies that $\verif$ accepts $(G,\ell)$ at $u$ with certificate function $c'$.  
This implies the Lemma.
\end{proof}

While every distributed language has a proof-labeling scheme, we show, using Lemma~\ref{lem:univ}, that there exist languages for which there are no error-sensitive proof-labeling schemes. 

\begin{theorem}\label{prop:notallwork}
There exist languages that do not admit any error-sensitive proof-labeling scheme. 
\end{theorem}

\begin{proof}
We show that there exist languages $\cL$ such that, for every proof-labeling scheme $(\prover,\verif)$ for $\cL$, and every $d \geq 1$, there exists a labeled graph $(G,\ell)$ at Hamming distance at least $d$ from $\cL$, and a certificate function $c$, such that $\verif$ rejects $(G,\ell)$ with certificate $c$ in at most a constant number of nodes. 
We consider labeled graphs $(G,\ell)$ where $\ell$ encodes a subgraph of $G$ as follows. The label $\ell(u)$ of node~$u$ is a list of neighbors of $u$ in $G$, such that
\[
\mbox{$v$ is in the list of $\ell(u) \iff u$ is in the list of $\ell(v)$}. 
\]
Such a labeling defines a subgraph of $G$ where every edge $\{u,v\}$ of $G$ is in that subgraph if and only if $v$ is in the list of $\ell(u)$. For a given $(G,\ell)$, we define $H_{\ell}$ as the subgraph described by $\ell$.
Now, let us consider the language 
\[
\mbox{\sc regular}=\{(G,\ell): \ell \mbox{ describes a subgraph } H_\ell,  \mbox{ and } H_\ell \; \mbox{is regular}\}.
\]
Let us assume, for the purpose of contradiction, that there exists an error-sensitive proof-labeling scheme $(\prover,\verif)$ for {\sc regular}. From Lemma~\ref{lem:univ}, it follows that the universal scheme $(\prover_{univ},\verif_{univ})$ is error-sensitive for  {\sc regular}. We show that this is not the case. 

Let $d_1$ and $d_2$ be two distinct integers. Let $G_1$ be a regular graph of degree $d_1$, and let $G'_1$ be a copy of $G_1$. 
Let $\{u_1,v_1\}\in E(G_1)$, and let $\{u'_1,v'_1\}$ be the corresponding edge in $G'_1$. 
We construct the graph $G_1^*$, obtained from $G_1$ and $G'_1$, by removing $\{u_1,v_1\}$ and $\{u'_1,v'_1\}$, and adding $\{u_1,u'_1\}$ and $\{v_1,v'_1\}$. By construction,  $G_1^*$ is $d_1$-regular. Similarly, we can construct a $d_2$-regular graph $G_2^*$ from a $d_2$-regular graph $G_2$ and its copy $G'_2$. We denote by $\{u_2,u'_2\}$ and $\{v_2,v'_2\}$ the edges connecting $G_2$ to its copy $G'_2$ in $G^*_2$. For $i\in\{1,2\}$, let $\ell_i$ be the labeling of the nodes  of $G^*_i$ such that $H_{\ell_i}=G_ i^*$. We have 
\[
(G^*_1,\ell_1)\in \mbox{\sc regular}, \; \mbox{and} \; (G^*_2,\ell_2)\in \mbox{\sc regular}.
\]
Let $G^*_3$ be the graph obtained from $G_1$ and $G_2$ by removing $\{u_1,v_1\}$ from $G_1$, removing $\{u_2,v_2\}$ from $G_2$, and adding the edges $\{u_1,u_2\}$ and $\{v_1,v_2\}$. 
Again, let us consider the labels $\ell_3$ assigned to the nodes of $G^*_3$ with $H_{\ell_3}=G_ 3^*$. Since $d_1\neq d_2$, we have 
\[
(G^*_3,\ell_3)\notin \mbox{\sc regular}. 
\]
Now, let us assign to the nodes of $G_1$ in $G^*_3$ the certificates assigned by $\prover_{univ}$ to the nodes of $G_1$ in $G^*_1$. 
Similarly, let us assign to the nodes of $G'_2$ in $G^*_3$ the certificates assigned by $\prover_{univ}$ to the nodes of $G'_2$ in $G^*_2$. 
With such certificates, only the four nodes $u_1$, $v_1$, $u_2$, and $v_2$, can detect an inconsistency between their certificates and the certificates of their neighborhoods. Therefore only these nodes may reject when running $\verif_{univ}$. Therefore,  at most $4$ nodes reject. On the other hand the distance between $(G^*_3,\ell_3)$ and {\sc regular} is at least as large as $\min\{|V(G_1)|,|V(G_2)|\}$. This distance can be made arbitrarily large, while the number of rejecting nodes remains constant. Hence, the universal proof-labeling scheme is not error-sensitive.
\end{proof}

\noindent\textbf{Remark.} The language {\sc regular} used in the proof of Theorem~\ref{prop:notallwork} to establish the existence of  languages that do not admit any error-sensitive proof-labeling schemes actually belongs to the class LCL of locally checkable labelings~\cite{NaorS95}. Therefore, the fact that a language is easy to check locally does not help for the design of error-sensitive proof-labeling schemes. 

\medskip

We complete this warmup section by some observations regarding the encoding of distributed data structures. Let us consider the following two distributed languages, both corresponding to spanning tree. The first language, $\mbox{\sc st}_p$, encodes the spanning trees using pointers to parents, while the second language, $\mbox{\sc st}_l$, encodes the spanning trees by listing all the incident edges of each node in these tree. 
\begin{eqnarray*}
\mbox{\sc st}_p & = \Big\{(G,\ell) :& \forall v\in V(G), \, \ell(v)\in N(v)\cup\{\bot\} \;\\
& &\mbox{and} \; \Big ( \bigcup_{v\in V(G)\,:\, \ell(v)\neq \bot}(v,\ell(v)) \Big )\; \mbox{forms a spanning tree} \Big\} 
\end{eqnarray*}
\begin{eqnarray*}
\mbox{\sc st}_l & =  \Big\{(G,\ell) :& \forall v\in V(G), \, \ell(v)\subseteq N(v) \; \mbox{and} \; u\in \ell(v) \mbox{ iff }  v \in \ell(u), \\
& & \mbox{ and } \Big ( \bigcup_{v\in V(G)}\bigcup_{u\in \ell(v)} (u,v)\Big ) \; \mbox{forms a spanning tree} \Big\}.
\end{eqnarray*}
Obviously, $\mbox{\sc st}_p$ is just a compressed version of $\mbox{\sc st}_l$ as the latter can be constructed from the former in just one round. However, note that  $\mbox{\sc st}_p$ cannot be recover from $\mbox{\sc st}_l$ in a constant number of rounds, because $\mbox{\sc st}_p$ provides a consistent orientation of the edges in the tree. It follows that  $\mbox{\sc st}_p$ is an  encoding of spanning trees which is actually strictly richer than $\mbox{\sc st}_l$. This difference between  $\mbox{\sc st}_p$ and  $\mbox{\sc st}_l$ is not anecdotal, as  we shall prove later that $\mbox{\sc st}_l$ admits an error-sensitive proof-labeling scheme, while we show hereafter that  $\mbox{\sc st}_p$ is not appropriate for the design of error-sensitive proof-labeling schemes.  

\begin{theorem}\label{prop:STnot}
$\mbox{\sc st}_p$ does not admit any error-sensitive proof-labeling scheme. 
\end{theorem}

\begin{proof}
In this proof, we will write $\ell(u)=v$ to denote the fact that the pointer encoded in the label of $u$ is pointing towards node~$v$.
Let $P_n$ be the $n$-node path $u_1,u_2,\dots,u_n$ with $n$ even. Let $\ell_0, \ell_1$, and $\ell_2$  be labelings defined~by: 

\begin{itemize}
\item 
$\ell_1(u_i)=u_{i+1}$ for all $1 \leq i < n$, and $\ell_1(u_n)=\bot$; 
 \item 
$\ell_2(u_i)=u_{i-1}$ for all $1<i\leq n $, and $\ell_2(u_1)=\bot$; 
\item 
and $\ell_3(u_i)=u_{i-1}$ for all $1<i\leq \frac{n}{2}$, $\ell_3(u_i)=u_{i+1}$ for all $\frac{n}{2}+1\leq i < n$, and $\ell_3(u_1)=\ell_3(u_n)=\bot$. 
\end{itemize} 

We have $(P_n,\ell_1)\in \mbox{\sc st}_p$ and $(P_n,\ell_2)\in \mbox{\sc st}_p$. 
The distance from $(P_n,\ell_3)$ to $\mbox{\sc st}_p$ is at least~$\frac{n}{2}$.
Indeed, let us modifying $(P_n,\ell_3)$ to get a correct instance $(P_n,\ell_4)$. 
Suppose, w.l.o.g., that the root of the tree described by $\ell_4$ is among the first half of the nodes. Then, to get from $\ell_3$ to $\ell_4$, all the pointers of the second half have to be changed, which means that the certificates in at least $n/2$ nodes must be modified.

 Let $(\prover,\verif)$ be a proof-labeling scheme for $\mbox{\sc st}_p$. Consider the case of $(P_n,\ell_3)$ where every $u_i$, $i=1,\dots,\frac{n}{2}$, is given the certificate assigned by $\prover$ to $u_i$ in $(P_n,\ell_2)$, and every $u_i$, $i=\frac{n}{2}+1,\dots,n$, is given the certificate assigned by $\prover$ to $u_i$ in $(P_n,\ell_1)$.  
With such certificates, all nodes $u_i$ for $i=1,\dots, u_{\frac{n}{2}-1}$ have the exact same view as in $(P_n,\ell_1)$, and all nodes $u_i$ for $i=u_{\frac{n}{2}+2},\dots, n$ have the exact same view as in $(P_n,\ell_2)$. Therefore all these $n$~nodes must accept.
Hence, $(P_n,\ell_3)$ can only be rejected by $\verif$ at the two nodes $u_{\frac{n}{2}}$ and~$u_{\frac{n}{2}+1}$.
\end{proof}


\section{Characterization}
\label{sec:characterization}

In this section, we define a notion of \emph{local stability} for languages (Definition~\ref{def:local-stability}), and show that being locally stable is equivalent to the fact of having an error-sensitive scheme (Theorem~\ref{thm:characterization}).
Then, we discuss a simpler but less general version of local stability, that we call \emph{strong local stability}.
Finally, we give several examples of application of our equivalence therorem. 

Roughly, local stability captures whether a patchwork of several correct instances (with a small contact area between the instances), can be a ``very incorrect'' instance, or an ``almost correct'' instance. 
For example, the language \mbox{\sc regular} from the previous section in non locally stable, because, by gluing together two regular graphs, one can get a graph that is very far from being regular whenever the original graphs have different degrees. 


In order to define the notions of local stability, we need to formalize the notion of a ``patchwork of solutions'' and of ``contact area''.
Let $G$ be a graph, and let $H$ be a subgraph of $G$, that is, a graph $H$ such that $V(H)\subseteq V(G)$, and $E(H)\subseteq E(G)$. We first define partial labelings and induced labelings (see Figure~\ref{fig:induced-labeling}).

\begin{definition}[Partial labeling]
 Given a labeling $\ell$ of a graph $G$, and a subgraph $H$ of $G$, the \emph{partial labeling} $\ell_H$ denotes the labeling of $H$ induced by $\ell$ restricted to the nodes of $H$: 
\[
\ell_H(v)=\left \{ \begin{array}{ll}
\ell(v) & \mbox{if $v\in V(H)$} \\
\varnothing & \mbox{otherwise (where $\varnothing $ denotes the empty string).}
\end{array}\right.
\]
\end{definition}

\begin{definition}[Induced labeling] 
Let $G$ be a graph, and let $H_1,\dots,H_k$ be a family of connected subgraphs of $G$ such that $(V(H_i))_{i=1,\dots,k}$ is a partition of $V(G)$. For every $i\in\{1,\dots,k\}$, let us consider a labeled graph $(G_i,\ell_i)\in \cL$ such that $H_i$ is a subgraph of $G_i$. 
Let $\ell$ be the following labeling of $G$: for every $v\in V(G)$, $\ell(v)=\ell_i(v)$ where $i$ is such that $v\in V(H_i)$. We say that such a labeled graph $(G,\ell)$ is \emph{induced} by the labeled graphs $(G_i,\ell_i)$, $i=1,\dots,k$, \emph{via} the subgraphs $H_1,\dots,H_k$. 
\end{definition}

\begin{figure}[tb]
\begin{tabular}{cc}
\begin{minipage}{0.55\textwidth}
\scalebox{0.83}{
\tikzset{every picture/.style={line width=0.75pt}} 

\begin{tikzpicture}[x=0.75pt,y=0.75pt,yscale=-1,xscale=1]

\draw [line width=1.5]  [dash pattern={on 5.63pt off 4.5pt}]  (120.77,95.73) .. controls (177.77,111.73) and (222.77,55.73) .. (221.77,20.73) ;
\draw [fill={rgb, 255:red, 255; green, 255; blue, 255 }  ,fill opacity=1 ][line width=1.5]    (137.27,51.27) -- (138.27,127.27) ;
\draw [fill={rgb, 255:red, 255; green, 255; blue, 255 }  ,fill opacity=1 ][line width=1.5]    (137.27,51.27) -- (154.27,79.27) ;
\draw [fill={rgb, 255:red, 255; green, 255; blue, 255 }  ,fill opacity=1 ][line width=1.5]    (137.27,51.27) -- (180.27,48.27) ;
\draw [fill={rgb, 255:red, 255; green, 255; blue, 255 }  ,fill opacity=1 ][line width=1.5]    (180.27,48.27) -- (203.27,28.27) ;
\draw [fill={rgb, 255:red, 255; green, 255; blue, 255 }  ,fill opacity=1 ][line width=1.5]    (180.27,48.27) -- (232.27,49.27) ;
\draw [fill={rgb, 255:red, 255; green, 255; blue, 255 }  ,fill opacity=1 ][line width=1.5]    (245.27,108.27) -- (232.27,49.27) ;
\draw [fill={rgb, 255:red, 255; green, 255; blue, 255 }  ,fill opacity=1 ][line width=1.5]    (203.27,92.27) -- (232.27,49.27) ;
\draw [fill={rgb, 255:red, 255; green, 255; blue, 255 }  ,fill opacity=1 ][line width=1.5]    (203.27,92.27) -- (245.27,108.27) ;
\draw [fill={rgb, 255:red, 255; green, 255; blue, 255 }  ,fill opacity=1 ][line width=1.5]    (154.27,79.27) -- (203.27,92.27) ;
\draw [fill={rgb, 255:red, 255; green, 255; blue, 255 }  ,fill opacity=1 ][line width=1.5]    (166.27,115.27) -- (203.27,92.27) ;
\draw [fill={rgb, 255:red, 255; green, 255; blue, 255 }  ,fill opacity=1 ][line width=1.5]    (203.27,28.27) -- (232.27,49.27) ;
\draw [fill={rgb, 255:red, 255; green, 255; blue, 255 }  ,fill opacity=1 ][line width=1.5]    (137.27,51.27) -- (203.27,28.27) ;
\draw [fill={rgb, 255:red, 255; green, 255; blue, 255 }  ,fill opacity=1 ][line width=1.5]    (154.27,79.27) -- (180.27,48.27) ;
\draw [fill={rgb, 255:red, 255; green, 255; blue, 255 }  ,fill opacity=1 ][line width=1.5]    (138.27,127.27) -- (154.27,79.27) ;
\draw [fill={rgb, 255:red, 255; green, 255; blue, 255 }  ,fill opacity=1 ][line width=1.5]    (138.27,127.27) -- (188.27,159.27) ;
\draw [fill={rgb, 255:red, 255; green, 255; blue, 255 }  ,fill opacity=1 ][line width=1.5]    (218.27,128.27) -- (188.27,159.27) ;
\draw [fill={rgb, 255:red, 255; green, 255; blue, 255 }  ,fill opacity=1 ][line width=1.5]    (245.27,108.27) -- (218.27,128.27) ;
\draw [fill={rgb, 255:red, 255; green, 255; blue, 255 }  ,fill opacity=1 ][line width=1.5]    (218.27,128.27) -- (166.27,115.27) ;
\draw [fill={rgb, 255:red, 255; green, 255; blue, 255 }  ,fill opacity=1 ][line width=1.5]    (166.27,115.27) -- (154.27,79.27) ;
\draw  [fill={rgb, 255:red, 255; green, 255; blue, 255 }  ,fill opacity=1 ][line width=1.5]  (130.77,127.27) .. controls (130.77,123.12) and (134.12,119.77) .. (138.27,119.77) .. controls (142.41,119.77) and (145.77,123.12) .. (145.77,127.27) .. controls (145.77,131.41) and (142.41,134.77) .. (138.27,134.77) .. controls (134.12,134.77) and (130.77,131.41) .. (130.77,127.27) -- cycle ;
\draw  [fill={rgb, 255:red, 255; green, 255; blue, 255 }  ,fill opacity=1 ][line width=1.5]  (180.77,159.27) .. controls (180.77,155.12) and (184.12,151.77) .. (188.27,151.77) .. controls (192.41,151.77) and (195.77,155.12) .. (195.77,159.27) .. controls (195.77,163.41) and (192.41,166.77) .. (188.27,166.77) .. controls (184.12,166.77) and (180.77,163.41) .. (180.77,159.27) -- cycle ;
\draw  [fill={rgb, 255:red, 255; green, 255; blue, 255 }  ,fill opacity=1 ][line width=1.5]  (210.77,128.27) .. controls (210.77,124.12) and (214.12,120.77) .. (218.27,120.77) .. controls (222.41,120.77) and (225.77,124.12) .. (225.77,128.27) .. controls (225.77,132.41) and (222.41,135.77) .. (218.27,135.77) .. controls (214.12,135.77) and (210.77,132.41) .. (210.77,128.27) -- cycle ;
\draw  [fill={rgb, 255:red, 255; green, 255; blue, 255 }  ,fill opacity=1 ][line width=1.5]  (237.77,108.27) .. controls (237.77,104.12) and (241.12,100.77) .. (245.27,100.77) .. controls (249.41,100.77) and (252.77,104.12) .. (252.77,108.27) .. controls (252.77,112.41) and (249.41,115.77) .. (245.27,115.77) .. controls (241.12,115.77) and (237.77,112.41) .. (237.77,108.27) -- cycle ;
\draw  [fill={rgb, 255:red, 255; green, 255; blue, 255 }  ,fill opacity=1 ][line width=1.5]  (195.77,92.27) .. controls (195.77,88.12) and (199.12,84.77) .. (203.27,84.77) .. controls (207.41,84.77) and (210.77,88.12) .. (210.77,92.27) .. controls (210.77,96.41) and (207.41,99.77) .. (203.27,99.77) .. controls (199.12,99.77) and (195.77,96.41) .. (195.77,92.27) -- cycle ;
\draw  [fill={rgb, 255:red, 255; green, 255; blue, 255 }  ,fill opacity=1 ][line width=1.5]  (158.77,115.27) .. controls (158.77,111.12) and (162.12,107.77) .. (166.27,107.77) .. controls (170.41,107.77) and (173.77,111.12) .. (173.77,115.27) .. controls (173.77,119.41) and (170.41,122.77) .. (166.27,122.77) .. controls (162.12,122.77) and (158.77,119.41) .. (158.77,115.27) -- cycle ;
\draw  [fill={rgb, 255:red, 255; green, 255; blue, 255 }  ,fill opacity=1 ][line width=1.5]  (224.77,49.27) .. controls (224.77,45.12) and (228.12,41.77) .. (232.27,41.77) .. controls (236.41,41.77) and (239.77,45.12) .. (239.77,49.27) .. controls (239.77,53.41) and (236.41,56.77) .. (232.27,56.77) .. controls (228.12,56.77) and (224.77,53.41) .. (224.77,49.27) -- cycle ;
\draw  [fill={rgb, 255:red, 255; green, 255; blue, 255 }  ,fill opacity=1 ][line width=1.5]  (195.77,28.27) .. controls (195.77,24.12) and (199.12,20.77) .. (203.27,20.77) .. controls (207.41,20.77) and (210.77,24.12) .. (210.77,28.27) .. controls (210.77,32.41) and (207.41,35.77) .. (203.27,35.77) .. controls (199.12,35.77) and (195.77,32.41) .. (195.77,28.27) -- cycle ;
\draw  [fill={rgb, 255:red, 255; green, 255; blue, 255 }  ,fill opacity=1 ][line width=1.5]  (172.77,48.27) .. controls (172.77,44.12) and (176.12,40.77) .. (180.27,40.77) .. controls (184.41,40.77) and (187.77,44.12) .. (187.77,48.27) .. controls (187.77,52.41) and (184.41,55.77) .. (180.27,55.77) .. controls (176.12,55.77) and (172.77,52.41) .. (172.77,48.27) -- cycle ;
\draw  [fill={rgb, 255:red, 255; green, 255; blue, 255 }  ,fill opacity=1 ][line width=1.5]  (129.77,51.27) .. controls (129.77,47.12) and (133.12,43.77) .. (137.27,43.77) .. controls (141.41,43.77) and (144.77,47.12) .. (144.77,51.27) .. controls (144.77,55.41) and (141.41,58.77) .. (137.27,58.77) .. controls (133.12,58.77) and (129.77,55.41) .. (129.77,51.27) -- cycle ;
\draw  [fill={rgb, 255:red, 255; green, 255; blue, 255 }  ,fill opacity=1 ][line width=1.5]  (146.77,79.27) .. controls (146.77,75.12) and (150.12,71.77) .. (154.27,71.77) .. controls (158.41,71.77) and (161.77,75.12) .. (161.77,79.27) .. controls (161.77,83.41) and (158.41,86.77) .. (154.27,86.77) .. controls (150.12,86.77) and (146.77,83.41) .. (146.77,79.27) -- cycle ;

\end{tikzpicture}
}
\scalebox{0.83}{
\tikzset{every picture/.style={line width=0.75pt}} 

\begin{tikzpicture}[x=0.75pt,y=0.75pt,yscale=-1,xscale=1]

\draw [line width=1.5]    (296.27,133.27) -- (345.27,161.27) ;
\draw [line width=1.5]    (390.27,52.27) -- (361.27,96.27) ;
\draw [line width=1.5]    (324.27,118.27) -- (376.27,133.27) ;
\draw [line width=1.5]    (324.27,118.27) -- (361.27,96.27) ;
\draw [line width=1.5]    (310.27,82.27) -- (361.27,96.27) ;
\draw [line width=1.5]    (324.27,118.27) -- (345.27,161.27) ;
\draw [line width=1.5]    (310.27,82.27) -- (324.27,118.27) ;
\draw [line width=1.5]    (310.27,82.27) -- (339.27,53.27) ;
\draw [line width=1.5]    (390.27,52.27) -- (426.27,53.27) ;
\draw [line width=1.5]    (339.27,53.27) -- (390.27,52.27) ;
\draw [line width=1.5]    (309.27,24.27) -- (390.27,52.27) ;
\draw [line width=1.5]    (310.27,82.27) -- (296.27,133.27) ;
\draw [line width=1.5]    (295.27,52.27) -- (310.27,82.27) ;
\draw [line width=1.5]    (295.27,52.27) -- (339.27,53.27) ;
\draw [line width=1.5]    (309.27,24.27) -- (339.27,53.27) ;
\draw [line width=1.5]    (426.27,53.27) -- (440.27,89.27) ;
\draw [line width=1.5]    (390.27,52.27) -- (405.27,109.27) ;
\draw [line width=1.5]    (405.27,109.27) -- (440.27,89.27) ;
\draw [line width=1.5]    (404.27,146.27) -- (440.27,89.27) ;
\draw [line width=1.5]    (405.27,109.27) -- (404.27,146.27) ;
\draw [line width=1.5]    (345.27,161.27) -- (404.27,146.27) ;
\draw [line width=1.5]    (345.27,161.27) -- (376.27,133.27) ;
\draw [line width=1.5]    (376.27,133.27) -- (405.27,109.27) ;
\draw [line width=1.5]    (361.27,96.27) -- (405.27,109.27) ;
\draw  [fill={rgb, 255:red, 190; green, 190; blue, 190 }  ,fill opacity=1 ][line width=1.5]  (301.77,24.27) .. controls (301.77,20.12) and (305.12,16.77) .. (309.27,16.77) .. controls (313.41,16.77) and (316.77,20.12) .. (316.77,24.27) .. controls (316.77,28.41) and (313.41,31.77) .. (309.27,31.77) .. controls (305.12,31.77) and (301.77,28.41) .. (301.77,24.27) -- cycle ;
\draw  [fill={rgb, 255:red, 190; green, 190; blue, 190 }  ,fill opacity=1 ][line width=1.5]  (382.77,52.27) .. controls (382.77,48.12) and (386.12,44.77) .. (390.27,44.77) .. controls (394.41,44.77) and (397.77,48.12) .. (397.77,52.27) .. controls (397.77,56.41) and (394.41,59.77) .. (390.27,59.77) .. controls (386.12,59.77) and (382.77,56.41) .. (382.77,52.27) -- cycle ;
\draw  [fill={rgb, 255:red, 190; green, 190; blue, 190 }  ,fill opacity=1 ][line width=1.5]  (418.77,53.27) .. controls (418.77,49.12) and (422.12,45.77) .. (426.27,45.77) .. controls (430.41,45.77) and (433.77,49.12) .. (433.77,53.27) .. controls (433.77,57.41) and (430.41,60.77) .. (426.27,60.77) .. controls (422.12,60.77) and (418.77,57.41) .. (418.77,53.27) -- cycle ;
\draw  [fill={rgb, 255:red, 190; green, 190; blue, 190 }  ,fill opacity=1 ][line width=1.5]  (432.77,89.27) .. controls (432.77,85.12) and (436.12,81.77) .. (440.27,81.77) .. controls (444.41,81.77) and (447.77,85.12) .. (447.77,89.27) .. controls (447.77,93.41) and (444.41,96.77) .. (440.27,96.77) .. controls (436.12,96.77) and (432.77,93.41) .. (432.77,89.27) -- cycle ;
\draw  [fill={rgb, 255:red, 190; green, 190; blue, 190 }  ,fill opacity=1 ][line width=1.5]  (397.77,109.27) .. controls (397.77,105.12) and (401.12,101.77) .. (405.27,101.77) .. controls (409.41,101.77) and (412.77,105.12) .. (412.77,109.27) .. controls (412.77,113.41) and (409.41,116.77) .. (405.27,116.77) .. controls (401.12,116.77) and (397.77,113.41) .. (397.77,109.27) -- cycle ;
\draw  [fill={rgb, 255:red, 190; green, 190; blue, 190 }  ,fill opacity=1 ][line width=1.5]  (353.77,96.27) .. controls (353.77,92.12) and (357.12,88.77) .. (361.27,88.77) .. controls (365.41,88.77) and (368.77,92.12) .. (368.77,96.27) .. controls (368.77,100.41) and (365.41,103.77) .. (361.27,103.77) .. controls (357.12,103.77) and (353.77,100.41) .. (353.77,96.27) -- cycle ;
\draw  [fill={rgb, 255:red, 190; green, 190; blue, 190 }  ,fill opacity=1 ][line width=1.5]  (368.77,133.27) .. controls (368.77,129.12) and (372.12,125.77) .. (376.27,125.77) .. controls (380.41,125.77) and (383.77,129.12) .. (383.77,133.27) .. controls (383.77,137.41) and (380.41,140.77) .. (376.27,140.77) .. controls (372.12,140.77) and (368.77,137.41) .. (368.77,133.27) -- cycle ;
\draw  [fill={rgb, 255:red, 190; green, 190; blue, 190 }  ,fill opacity=1 ][line width=1.5]  (337.77,161.27) .. controls (337.77,157.12) and (341.12,153.77) .. (345.27,153.77) .. controls (349.41,153.77) and (352.77,157.12) .. (352.77,161.27) .. controls (352.77,165.41) and (349.41,168.77) .. (345.27,168.77) .. controls (341.12,168.77) and (337.77,165.41) .. (337.77,161.27) -- cycle ;
\draw  [fill={rgb, 255:red, 190; green, 190; blue, 190 }  ,fill opacity=1 ][line width=1.5]  (316.77,118.27) .. controls (316.77,114.12) and (320.12,110.77) .. (324.27,110.77) .. controls (328.41,110.77) and (331.77,114.12) .. (331.77,118.27) .. controls (331.77,122.41) and (328.41,125.77) .. (324.27,125.77) .. controls (320.12,125.77) and (316.77,122.41) .. (316.77,118.27) -- cycle ;
\draw  [fill={rgb, 255:red, 190; green, 190; blue, 190 }  ,fill opacity=1 ][line width=1.5]  (288.77,133.27) .. controls (288.77,129.12) and (292.12,125.77) .. (296.27,125.77) .. controls (300.41,125.77) and (303.77,129.12) .. (303.77,133.27) .. controls (303.77,137.41) and (300.41,140.77) .. (296.27,140.77) .. controls (292.12,140.77) and (288.77,137.41) .. (288.77,133.27) -- cycle ;
\draw  [fill={rgb, 255:red, 190; green, 190; blue, 190 }  ,fill opacity=1 ][line width=1.5]  (302.77,82.27) .. controls (302.77,78.12) and (306.12,74.77) .. (310.27,74.77) .. controls (314.41,74.77) and (317.77,78.12) .. (317.77,82.27) .. controls (317.77,86.41) and (314.41,89.77) .. (310.27,89.77) .. controls (306.12,89.77) and (302.77,86.41) .. (302.77,82.27) -- cycle ;
\draw  [fill={rgb, 255:red, 190; green, 190; blue, 190 }  ,fill opacity=1 ][line width=1.5]  (331.77,53.27) .. controls (331.77,49.12) and (335.12,45.77) .. (339.27,45.77) .. controls (343.41,45.77) and (346.77,49.12) .. (346.77,53.27) .. controls (346.77,57.41) and (343.41,60.77) .. (339.27,60.77) .. controls (335.12,60.77) and (331.77,57.41) .. (331.77,53.27) -- cycle ;
\draw  [fill={rgb, 255:red, 190; green, 190; blue, 190 }  ,fill opacity=1 ][line width=1.5]  (287.77,52.27) .. controls (287.77,48.12) and (291.12,44.77) .. (295.27,44.77) .. controls (299.41,44.77) and (302.77,48.12) .. (302.77,52.27) .. controls (302.77,56.41) and (299.41,59.77) .. (295.27,59.77) .. controls (291.12,59.77) and (287.77,56.41) .. (287.77,52.27) -- cycle ;
\draw  [fill={rgb, 255:red, 190; green, 190; blue, 190 }  ,fill opacity=1 ][line width=1.5]  (396.77,146.27) .. controls (396.77,142.12) and (400.12,138.77) .. (404.27,138.77) .. controls (408.41,138.77) and (411.77,142.12) .. (411.77,146.27) .. controls (411.77,150.41) and (408.41,153.77) .. (404.27,153.77) .. controls (400.12,153.77) and (396.77,150.41) .. (396.77,146.27) -- cycle ;
\draw [line width=1.5]  [dash pattern={on 5.63pt off 4.5pt}]  (361,72) .. controls (379.37,52.53) and (384.37,37.53) .. (377.37,16.53) ;
\draw [line width=1.5]  [dash pattern={on 5.63pt off 4.5pt}]  (361,72) .. controls (413.37,87.53) and (361.37,166.53) .. (428.37,177.53) ;

\end{tikzpicture}}
\vspace{1cm}
\hspace{2.0cm}
\scalebox{0.83}{
\tikzset{every picture/.style={line width=0.75pt}} 

\begin{tikzpicture}[x=0.75pt,y=0.75pt,yscale=-1,xscale=1]

\draw [line width=1.5]    (193.27,145.27) -- (242.27,173.27) ;
\draw [line width=1.5]    (242.27,173.27) -- (301.27,157.27) ;
\draw [line width=1.5]    (242.27,173.27) -- (274.27,143.27) ;
\draw [line width=1.5]    (301.27,157.27) -- (338.27,101.27) ;
\draw [line width=1.5]    (301.27,121.27) -- (338.27,101.27) ;
\draw [line width=1.5]    (301.27,157.27) -- (301.27,121.27) ;
\draw [line width=1.5]    (221.27,127.27) -- (274.27,143.27) ;
\draw [line width=1.5]    (242.27,173.27) -- (221.27,127.27) ;
\draw [line width=1.5]    (193.27,145.27) -- (206.27,92.27) ;
\draw [line width=1.5]    (206.27,92.27) -- (259.27,107.27) ;
\draw [line width=1.5]    (206.27,92.27) -- (221.27,127.27) ;
\draw [line width=1.5]    (259.27,107.27) -- (301.27,121.27) ;
\draw [line width=1.5]    (221.27,127.27) -- (259.27,107.27) ;
\draw [line width=1.5]    (274.27,143.27) -- (301.27,121.27) ;
\draw  [fill={gray}  ,fill opacity=1 ][line width=1.5]  (185.77,145.27) .. controls (185.77,141.12) and (189.12,137.77) .. (193.27,137.77) .. controls (197.41,137.77) and (200.77,141.12) .. (200.77,145.27) .. controls (200.77,149.41) and (197.41,152.77) .. (193.27,152.77) .. controls (189.12,152.77) and (185.77,149.41) .. (185.77,145.27) -- cycle ;
\draw  [fill={gray}  ,fill opacity=1 ][line width=1.5]  (234.77,173.27) .. controls (234.77,169.12) and (238.12,165.77) .. (242.27,165.77) .. controls (246.41,165.77) and (249.77,169.12) .. (249.77,173.27) .. controls (249.77,177.41) and (246.41,180.77) .. (242.27,180.77) .. controls (238.12,180.77) and (234.77,177.41) .. (234.77,173.27) -- cycle ;
\draw  [fill={gray}  ,fill opacity=1 ][line width=1.5]  (293.77,157.27) .. controls (293.77,153.12) and (297.12,149.77) .. (301.27,149.77) .. controls (305.41,149.77) and (308.77,153.12) .. (308.77,157.27) .. controls (308.77,161.41) and (305.41,164.77) .. (301.27,164.77) .. controls (297.12,164.77) and (293.77,161.41) .. (293.77,157.27) -- cycle ;
\draw  [fill={gray}  ,fill opacity=1 ][line width=1.5]  (330.77,101.27) .. controls (330.77,97.12) and (334.12,93.77) .. (338.27,93.77) .. controls (342.41,93.77) and (345.77,97.12) .. (345.77,101.27) .. controls (345.77,105.41) and (342.41,108.77) .. (338.27,108.77) .. controls (334.12,108.77) and (330.77,105.41) .. (330.77,101.27) -- cycle ;
\draw  [fill={gray}  ,fill opacity=1 ][line width=1.5]  (293.77,121.27) .. controls (293.77,117.12) and (297.12,113.77) .. (301.27,113.77) .. controls (305.41,113.77) and (308.77,117.12) .. (308.77,121.27) .. controls (308.77,125.41) and (305.41,128.77) .. (301.27,128.77) .. controls (297.12,128.77) and (293.77,125.41) .. (293.77,121.27) -- cycle ;
\draw  [fill={gray}  ,fill opacity=1 ][line width=1.5]  (251.77,107.27) .. controls (251.77,103.12) and (255.12,99.77) .. (259.27,99.77) .. controls (263.41,99.77) and (266.77,103.12) .. (266.77,107.27) .. controls (266.77,111.41) and (263.41,114.77) .. (259.27,114.77) .. controls (255.12,114.77) and (251.77,111.41) .. (251.77,107.27) -- cycle ;
\draw  [fill={gray}  ,fill opacity=1 ][line width=1.5]  (266.77,143.27) .. controls (266.77,139.12) and (270.12,135.77) .. (274.27,135.77) .. controls (278.41,135.77) and (281.77,139.12) .. (281.77,143.27) .. controls (281.77,147.41) and (278.41,150.77) .. (274.27,150.77) .. controls (270.12,150.77) and (266.77,147.41) .. (266.77,143.27) -- cycle ;
\draw  [fill={gray}  ,fill opacity=1 ][line width=1.5]  (213.77,127.27) .. controls (213.77,123.12) and (217.12,119.77) .. (221.27,119.77) .. controls (225.41,119.77) and (228.77,123.12) .. (228.77,127.27) .. controls (228.77,131.41) and (225.41,134.77) .. (221.27,134.77) .. controls (217.12,134.77) and (213.77,131.41) .. (213.77,127.27) -- cycle ;
\draw  [fill={gray}  ,fill opacity=1 ][line width=1.5]  (198.77,92.27) .. controls (198.77,88.12) and (202.12,84.77) .. (206.27,84.77) .. controls (210.41,84.77) and (213.77,88.12) .. (213.77,92.27) .. controls (213.77,96.41) and (210.41,99.77) .. (206.27,99.77) .. controls (202.12,99.77) and (198.77,96.41) .. (198.77,92.27) -- cycle ;
\draw [line width=1.5]  [dash pattern={on 5.63pt off 4.5pt}]  (176.17,111.53) .. controls (218.97,114.53) and (232.37,111.53) .. (257.37,83.53) ;
\draw [line width=1.5]  [dash pattern={on 5.63pt off 4.5pt}]  (257.37,83.53) .. controls (313.17,95.53) and (265.17,184.53) .. (312.17,184.53) ;

\end{tikzpicture}}
\end{minipage}
&
\begin{minipage}{0.4\textwidth}
\scalebox{0.86}{
\tikzset{every picture/.style={line width=0.75pt}} 

\begin{tikzpicture}[x=0.75pt,y=0.75pt,yscale=-1,xscale=1]


\draw [line width=1.5]    (225.48,18.0) -- (200.00,45.0) ;

\draw [line width=1.5]    (143.48,47.48) -- (143.48,145.48) ;
\draw [line width=1.5]    (143.48,145.48) -- (206.48,181.48) ;
\draw [line width=1.5]    (206.48,181.48) -- (275.48,162.48) ;
\draw [line width=1.5]    (275.48,162.48) -- (321.48,90.48) ;
\draw [line width=1.5]    (304.48,46.48) -- (321.48,90.48) ;
\draw [line width=1.5]    (258.48,44.48) -- (304.48,46.48) ;
\draw [line width=1.5]    (223.48,20.48) -- (258.48,44.48) ;
\draw [line width=1.5]    (143.48,47.48) -- (223.48,20.48) ;
\draw [line width=1.5]    (198.48,46.48) -- (258.48,44.48) ;
\draw [line width=1.5]    (143.48,47.48) -- (198.48,46.48) ;
\draw [line width=1.5]    (198.48,46.48) -- (162.48,82.48) ;
\draw [line width=1.5]    (162.48,82.48) -- (143.48,145.48) ;
\draw [line width=1.5]    (143.48,47.48) -- (162.48,82.48) ;
\draw [line width=1.5]    (162.48,82.48) -- (222.48,100.48) ;
\draw [line width=1.5]    (258.48,44.48) -- (222.48,100.48) ;
\draw [line width=1.5]    (258.48,44.48) -- (275.48,117.48) ;
\draw [line width=1.5]    (222.48,100.48) -- (275.48,117.48) ;
\draw [line width=1.5]    (275.48,117.48) -- (321.48,90.48) ;
\draw [line width=1.5]    (275.48,117.48) -- (275.48,162.48) ;
\draw [line width=1.5]    (162.48,82.48) -- (179.48,127.48) ;
\draw [line width=1.5]    (179.48,127.48) -- (206.48,181.48) ;
\draw [line width=1.5]    (241.48,145.48) -- (275.48,117.48) ;
\draw [line width=1.5]    (206.48,181.48) -- (241.48,145.48) ;
\draw [line width=1.5]    (179.48,127.48) -- (241.48,145.48) ;
\draw [line width=1.5]    (179.48,127.48) -- (222.48,100.48) ;
\draw  [fill={white}  ,fill opacity=1 ][line width=1.5]  (215,20.48) .. controls (215,15.8) and (218.8,12) .. (223.48,12) .. controls (228.17,12) and (231.97,15.8) .. (231.97,20.48) .. controls (231.97,25.17) and (228.17,28.97) .. (223.48,28.97) .. controls (218.8,28.97) and (215,25.17) .. (215,20.48) -- cycle ;
\draw  [fill={rgb, 255:red, 190; green, 190; blue, 190 }  ,fill opacity=1 ][line width=1.5]  (250,44.48) .. controls (250,39.8) and (253.8,36) .. (258.48,36) .. controls (263.17,36) and (266.97,39.8) .. (266.97,44.48) .. controls (266.97,49.17) and (263.17,52.97) .. (258.48,52.97) .. controls (253.8,52.97) and (250,49.17) .. (250,44.48) -- cycle ;
\draw  [fill={rgb, 255:red, 190; green, 190; blue, 190 }  ,fill opacity=1 ][line width=1.5]  (296,46.48) .. controls (296,41.8) and (299.8,38) .. (304.48,38) .. controls (309.17,38) and (312.97,41.8) .. (312.97,46.48) .. controls (312.97,51.17) and (309.17,54.97) .. (304.48,54.97) .. controls (299.8,54.97) and (296,51.17) .. (296,46.48) -- cycle ;
\draw  [fill={rgb, 255:red, 190; green, 190; blue, 190 }  ,fill opacity=1 ][line width=1.5]  (313,90.48) .. controls (313,85.8) and (316.8,82) .. (321.48,82) .. controls (326.17,82) and (329.97,85.8) .. (329.97,90.48) .. controls (329.97,95.17) and (326.17,98.97) .. (321.48,98.97) .. controls (316.8,98.97) and (313,95.17) .. (313,90.48) -- cycle ;
\draw  [fill={rgb, 255:red, 190; green, 190; blue, 190 }  ,fill opacity=1 ][line width=1.5]  (267,162.48) .. controls (267,157.8) and (270.8,154) .. (275.48,154) .. controls (280.17,154) and (283.97,157.8) .. (283.97,162.48) .. controls (283.97,167.17) and (280.17,170.97) .. (275.48,170.97) .. controls (270.8,170.97) and (267,167.17) .. (267,162.48) -- cycle ;
\draw  [fill={rgb, 255:red, 190; green, 190; blue, 190 }  ,fill opacity=1 ][line width=1.5]  (267,117.48) .. controls (267,112.8) and (270.8,109) .. (275.48,109) .. controls (280.17,109) and (283.97,112.8) .. (283.97,117.48) .. controls (283.97,122.17) and (280.17,125.97) .. (275.48,125.97) .. controls (270.8,125.97) and (267,122.17) .. (267,117.48) -- cycle ;
\draw  [fill={gray}  ,fill opacity=1 ][line width=1.5]  (214,100.48) .. controls (214,95.8) and (217.8,92) .. (222.48,92) .. controls (227.17,92) and (230.97,95.8) .. (230.97,100.48) .. controls (230.97,105.17) and (227.17,108.97) .. (222.48,108.97) .. controls (217.8,108.97) and (214,105.17) .. (214,100.48) -- cycle ;
\draw  [fill={white}  ,fill opacity=1 ][line width=1.5]  (190,46.48) .. controls (190,41.8) and (193.8,38) .. (198.48,38) .. controls (203.17,38) and (206.97,41.8) .. (206.97,46.48) .. controls (206.97,51.17) and (203.17,54.97) .. (198.48,54.97) .. controls (193.8,54.97) and (190,51.17) .. (190,46.48) -- cycle ;
\draw  [fill={white}  ,fill opacity=1 ][line width=1.5]  (135,47.48) .. controls (135,42.8) and (138.8,39) .. (143.48,39) .. controls (148.17,39) and (151.97,42.8) .. (151.97,47.48) .. controls (151.97,52.17) and (148.17,55.97) .. (143.48,55.97) .. controls (138.8,55.97) and (135,52.17) .. (135,47.48) -- cycle ;
\draw  [fill={white}  ,fill opacity=1 ][line width=1.5]  (154,82.48) .. controls (154,77.8) and (157.8,74) .. (162.48,74) .. controls (167.17,74) and (170.97,77.8) .. (170.97,82.48) .. controls (170.97,87.17) and (167.17,90.97) .. (162.48,90.97) .. controls (157.8,90.97) and (154,87.17) .. (154,82.48) -- cycle ;
\draw  [fill={gray}  ,fill opacity=1 ][line width=1.5]  (135,145.48) .. controls (135,140.8) and (138.8,137) .. (143.48,137) .. controls (148.17,137) and (151.97,140.8) .. (151.97,145.48) .. controls (151.97,150.17) and (148.17,153.97) .. (143.48,153.97) .. controls (138.8,153.97) and (135,150.17) .. (135,145.48) -- cycle ;
\draw  [fill={gray}  ,fill opacity=1 ][line width=1.5]  (171,127.48) .. controls (171,122.8) and (174.8,119) .. (179.48,119) .. controls (184.17,119) and (187.97,122.8) .. (187.97,127.48) .. controls (187.97,132.17) and (184.17,135.97) .. (179.48,135.97) .. controls (174.8,135.97) and (171,132.17) .. (171,127.48) -- cycle ;
\draw  [fill={gray}  ,fill opacity=1 ][line width=1.5]  (233,145.48) .. controls (233,140.8) and (236.8,137) .. (241.48,137) .. controls (246.17,137) and (249.97,140.8) .. (249.97,145.48) .. controls (249.97,150.17) and (246.17,153.97) .. (241.48,153.97) .. controls (236.8,153.97) and (233,150.17) .. (233,145.48) -- cycle ;
\draw  [fill={gray}  ,fill opacity=1 ][line width=1.5]  (198,181.48) .. controls (198,176.8) and (201.8,173) .. (206.48,173) .. controls (211.17,173) and (214.97,176.8) .. (214.97,181.48) .. controls (214.97,186.17) and (211.17,189.97) .. (206.48,189.97) .. controls (201.8,189.97) and (198,186.17) .. (198,181.48) -- cycle ;
\draw [line width=1.5]  [dash pattern={on 5.63pt off 4.5pt}]  (122.77,106.63) .. controls (165.77,108.63) and (213.77,90.63) .. (250.77,9.63) ;
\draw [line width=1.5]  [dash pattern={on 5.63pt off 4.5pt}]  (223.77,66.63) .. controls (295.77,92.63) and (220.77,166.63) .. (293.77,202.63) ;

\end{tikzpicture}}
\end{minipage}
\end{tabular}
\caption{\sl
Illustration of a labeling induced by other labelings. First, consider the three graphs on the left. The top-left graph is $G_1$, which has a labeling $\ell_1$ represented by the node colored white, and $H_1$ is the subgraph that is above the dashed line.
Similarly, $G_2$ is the top-right graph, with labeling $\ell_2$ represented by color light grey, and $H_2$ is on the right of the dashed line. Finally, $G_3$ is the graph on the bottom, with labeling $\ell_3$ represented by color  dark gray, and $H_3$ is below the dashed line. Now, the graph on the right has a labeling that is induced by $(G_1,\ell_1)$, $(G_2,\ell_2)$ and $(G_3,\ell_3)$, via $H_1$, $H_2$ and $H_3$. }
\label{fig:induced-labeling}
\end{figure}

We also define a notion of \emph{boundary}.

\begin{definition}[Boundary]
Let $G$ be a graph, and $H$ be a subgraph of $G$. The \emph{boundary} of $H$ in~$G$, denoted by $\partial_G H$ is the set of nodes of $V(H)$ that are incident to an edge in $E(G)\setminus E(H)$.
\end{definition}

We are now ready to define local stability.

\begin{definition}
\label{def:local-stability}
A language $\cL$ is \emph{locally stable} if there exists a constant $\beta>0$, such that, for every labeled graph $(G,\ell)$ and for every $k$, the following holds. For every labeled graphs $(G_i,\ell_i)\in \cL$, $i=1,\dots,k$, and every subgraphs $H_1,\dots,H_k$, such that $(G,\ell)$ is induced by the labeled graphs $(G_i,\ell_i)$ $i=1\dots k$ via the graphs $H_i$, $i=1\dots k$, the following holds:
\[
\dist ((G,\ell),\cL) \leq \beta \; |\cup_{i=1}^k (\partial_{G}H_i \cup \partial_{G_i}H_i)|.
\]
\end{definition}

Intuitively the definition says that by taking pieces of labelings from different correct instances (that might not use the same underlying graph), we get a labeling whose distance to the language is at most the size of the boundary between the pieces, up to some multiplicative constant. 
Note that $\partial_{G}H_i \cup \partial_{G_i}H_i$ measures the size of the boundary of the $i$-th piece in the patchwork instance, but also in the correct instance it comes from.

Our characterization is the following.

\begin{theorem}\label{thm:characterization}
Let $\cL$ be a distributed language. $\cL$ admits an error-sensitive proof-labeling scheme if and only if $\cL$ is locally stable.
\end{theorem}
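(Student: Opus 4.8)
The plan is to prove both directions of the equivalence separately, establishing that local stability is both necessary and sufficient for the existence of an error-sensitive proof-labeling scheme.

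For the \emph{necessity} direction (error-sensitive $\Rightarrow$ locally stable), I would start from an error-sensitive scheme $(\prover,\verif)$ with sensitivity parameter $\alpha$, and aim to produce a local-stability constant $\beta$ (I expect $\beta$ to be roughly $\frac{1}{\alpha}$ times a small constant depending on the radius of the verifier, which is $1$ here). The key idea: take a labeled graph $(G,\ell)$ induced by $(G_i,\ell_i)\in\cL$ via subgraphs $H_1,\dots,H_k$. Assign to each node $v\in V(H_i)$ the certificate that $\prover$ gives it in the legal instance $(G_i,\ell_i)$. Now, because $\verif$ inspects only the closed neighborhood of a node, a node $v$ deep inside some $H_i$ — i.e.\ a node all of whose neighbors also lie in $V(H_i)$ \emph{and} whose neighborhood in $G$ exactly matches its neighborhood in $G_i$ — sees locally exactly the legal instance $(G_i,\ell_i)$, and therefore accepts. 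Consequently, the only nodes that can reject are those incident to an edge crossing between two different $H_i$'s, or nodes whose $G$-neighborhood differs from their $G_i$-neighborhood; these are precisely the boundary nodes counted by $|\cup_i(\partial_G H_i\cup\partial_{G_i}H_i)|$. Hence the number of rejecting nodes is at most (a constant times) the boundary size $B$. Error-sensitivity then forces $\alpha\cdot\dist((G,\ell),\cL)\le \#\{\text{rejecting nodes}\}\le cB$, giving $\dist((G,\ell),\cL)\le \frac{c}{\alpha}B$, which is exactly local stability with $\beta=\frac{c}{\alpha}$.

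For the \emph{sufficiency} direction (locally stable $\Rightarrow$ error-sensitive), I would use the universal scheme and invoke Lemma~\ref{lem:univ} in reverse spirit: it suffices to exhibit \emph{some} error-sensitive scheme, and the natural candidate is $(\prover_{univ},\verif_{univ})$. Given an illegal $(G,\ell)$ and any certificate function $c$, consider the set $A$ of accepting nodes. Each accepting node, together with its neighbors, agrees on a common claimed map $(M,L)\in\cL$ that is locally consistent with the real graph. The connected components of the accepting region thus each ``certify'' a legal labeled graph, and these legal instances serve as the $(G_i,\ell_i)$ witnessing that $(G,\ell)$ is induced via subgraphs $H_i$ (the accepting components). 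Here the boundary nodes of the $H_i$ are exactly the rejecting nodes (and nodes adjacent to them). Local stability then yields $\dist((G,\ell),\cL)\le \beta\cdot B \le \beta\cdot c'\cdot \#\{\text{rejecting nodes}\}$, so the number of rejecting nodes is at least $\frac{1}{\beta c'}\dist((G,\ell),\cL)$, establishing error-sensitivity with $\alpha=\frac{1}{\beta c'}$.

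The \textbf{main obstacle} I anticipate is in the sufficiency direction, in making the correspondence between accepting regions and the induced-labeling witnesses fully rigorous. The subtlety is that the accepting components of $\verif_{univ}$ each claim a (possibly different) global map $(M,L)$, and I must verify (i) that these claimed maps are genuinely in $\cL$, (ii) that the real labels of the nodes in each accepting component coincide with the labels these maps prescribe — so that the induced labeling $\ell=\sum_i \ell_i$ actually reproduces the original $\ell$ on the interior — and (iii) that the boundary-node count of the $H_i$ in both $G$ and the $G_i$ is bounded by a constant multiple of the number of rejecting nodes, using the bounded-degree-free but single-round nature of the verifier. Handling nodes adjacent to several accepting components, and ensuring no ``interior'' node is miscounted, requires care; this is where the structural content of local stability is genuinely exploited.
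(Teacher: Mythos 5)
Your necessity direction is essentially the paper's own argument: copy the certificates $c_i$ from the legal instances $(G_i,\ell_i)$ onto $V(H_i)$, observe that any node outside $\partial_G H_i\cup\partial_{G_i}H_i$ has exactly the same closed neighborhood (labels and certificates included) in $(G,\ell)$ as in $(G_i,\ell_i)$ and hence accepts, and then apply error-sensitivity; the paper gets $\beta=1/\alpha$ this way. That half is fine. The sufficiency direction, however, has a genuine gap, located precisely at the step you flagged as obstacle (iii). You take as the parts $H_i$ the connected components of the \emph{accepting} region (leaving aside the minor point that the rejecting nodes must also be covered to obtain a partition, fixable via constructibility). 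With that decomposition, the boundary $B=|\cup_i(\partial_G H_i\cup\partial_{G_i}H_i)|$ contains every accepting node that has a rejecting neighbor, and this is \emph{not} bounded by a constant multiple of the number $k$ of rejecting nodes: there is no degree bound, so one rejecting node can have arbitrarily many accepting neighbors, each of which lies on the boundary of its own accepting component. Concretely, on a star whose center rejects and whose $n-1$ leaves accept, your decomposition gives $B=n$ while $k=1$; the inequality $B\le c'k$ you invoke is false, and local stability then only yields the vacuous bound $\dist\big((G,\ell),\cL\big)\le \beta n$. The single-round nature of the verifier does not rescue this — the obstruction is unbounded degree, the same phenomenon that separates local stability from strong local stability in the paper (Proposition~\ref{prop:strictlyandbounded}).

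The missing idea is a different decomposition, which is the crux of the paper's proof. Remove from $G$ only those edges whose \emph{both} endpoints reject, and let the parts be the connected components $C$ of the resulting graph $G'$. Since a disagreement on the universal certificate $(T,M,L)$ across an edge is detected by \emph{both} endpoints, every surviving edge joins nodes agreeing on $(T,M,L)$; hence each component carries one uniform claimed instance $(G_C,\ell_C)=(M,L)\in\cL$ (taken from any accepting node in it), and any component consisting solely of rejecting nodes is necessarily an isolated singleton, to which one assigns a legal one-node instance. The payoff is that now every boundary node, whether in $\partial_G C$ or in $\partial_{G_C}C$, is incident to a removed or missing edge and is therefore itself a rejecting node — accepting nodes adjacent to rejecting ones are swallowed into the interior of the components rather than counted as boundary. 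This gives $B\le k$, and together with the at most $k$ label edits needed at rejecting nodes to pass from $\ell$ to the induced labeling, local stability yields $\dist\big((G,\ell),\cL\big)\le(\beta+1)k$, i.e., error-sensitivity with $\alpha=1/(\beta+1)$. Without this decomposition (or an equivalent device), your sufficiency argument does not go through.
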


More precisely, we establish that a language with sensitivity $\alpha$ is locally stable with parameter $\beta=\frac{1}{\alpha}$, and that a language with local stability $\beta$ is error-sensitive with parameter $\alpha=\frac{1}{\beta+1}$. We do not know whether these relations are tight or not.

\begin{proof}
We first show that if  a distributed language $\cL$ admits an error-sensitive proof-labeling scheme then $\cL$ is locally stable. So, let $\cL$ be a distributed language, and let $(\prover,\verif)$ be an error-sensitive proof-labeling scheme for $\cL$ with sensitivity parameter $\alpha$. Let $(G,\ell)$ be a labeled graph induced by labeled graphs $(G_i,\ell_i)\in \cL$, $i=1,\dots,h$, via the subgraphs $H_1,\dots,H_h$ for some $h\geq 1$. Since, for every $i\in\{1,\dots,h\}$, $(G_i,\ell_i)\in \cL$, there exists a certificate function $c_i$ such that $\verif$ accepts at every node of $(G_i,\ell_i)$ provided with the certificate function $c_i$.
Now, let us consider the labeled graph $(G,\ell)$, with certificate $c_i(u)$ on every node $u\in V(H_i)$ for all $i=1,\dots,h$. With such certificates, the nodes in $V(H_i)$ that are not in $\partial_{G}H_i \cup \partial_{G_i}H_i$ have the same close neighborhood  in $(G,\ell)$ and in $(G_i,\ell_i)$. Therefore,  they accept in $(G,\ell)$ the same way they accept in $(G_i,\ell_i)$. It follows that the number of rejecting nodes is bounded by $|\cup_{i=1}^h (\partial_{G}H_i \cup \partial_{G_i}H_i)|$, and therefore $(G,\ell)$ is at Hamming distance at most $\frac{1}{\alpha} |\cup_{i=1}^h (\partial_{G}H_i \cup \partial_{G_i}H_i)|$ from~$\cL$. Hence, $\cL$ is locally stable, with parameter $\beta = \frac{1}{\alpha}$. 

It remains to show that if a distributed language is locally stable then it admits an error-sensitive proof-labeling scheme. Let $\cL$ be a locally stable distributed language with parameter $\beta$. We prove that the universal proof-labeling scheme $(\prover_{univ},\verif_{univ})$ for $\cL$ (cf. Section~\ref{sec:warmup}) is error-sensitive for some parameter $\alpha$ depending only on $\beta$. Let $(G,\ell)\notin\cL$, and let us fix some certificate function $c$. The verifier $\verif_{univ}$ rejects in at least one node. We show that if $\verif_{univ}$ rejects at $k$ nodes, then the Hamming distance between $(G,\ell)$ and $\cL$ is at most $k/\alpha$ for some constant $\alpha>0$ depending only on $\beta$. For this purpose, let us consider the outputs of $\verif_{univ}$ applied to $(G,\ell)$ with certificate $c$, and let us define the graph $G'$ as the graph obtained from $G$ by removing all edges for which $\verif_{univ}$ rejects at both extremities. Note that the graph $G'$ may not be connected. 

Let $C$ be a connected component of $G'$, with at least one node $u$ at which $\verif_{univ}$ accepts.
Recall that we used the notation $(T,M,L)$ for the certificates of the universal scheme (cf. Section~\ref{sec:warmup}). 
We claim that all the vertices of $C$ have received the same certificate $(T,M,L)$. 
Indeed, if it is not the case, then, by connectivity there exist two vertices that are adjacent in~$C$, and that do not have the same certificate. 
This is a contradiction. Indeed, these two vertices would have detected the inconsistency, and would have both rejected, thus the edge between them would have been removed. We denote by $(G_C,\ell_C)$ the labeled graph described by $(M,L)$. 
In addition, since  $\verif_{univ}$ accepts in at least one node $u$, it must be that $(G_C,\ell_C)\in\cL$.
Finally, we prove that $C$ is a subgraph of $G_C$, and that the labeling $\ell$ and $\ell_C$ coincide on $C$.
Consider an edge of $C$. Necessarily, at least one of its endpoints is accepting (otherwise this edge would have been removed). If the vertex accepts, it means that this edge exists in $G_C$, and that both endpoints have the same label in $\ell$ and $\ell_C$. 

Let us now consider the other possibility: $C$ is a connected component of $G'$ where all nodes reject. 
By construction, such a component is composed of just one isolated node. 
For every such isolated rejecting node~$u$, let us denote by $(G_C,\ell_C)$ a labeled graph composed of a unique node, with ID equal to the ID of $u$, and with labeling $\ell_C(u)$ such that $(G_C,\ell_C)\in \cL$.

Let $\cal C$ be the set of all connected components of $G'$.
Note that $\cal{C}$ is a partition of the vertices of $G$, and that we established that for every $C$, $C$ is a subgraph of $G_C$, 
and the labelings $\ell_C$ and $\ell$ coincides on $C$.
Therefore we can legally define $(G,\ell')$ as the graph induced by labeled graphs $(G_C,\ell_C)$ via the subgraphs $C\in \cal C$. 
By local stability, we get the following:
\[
\dist((G,\ell'),\cL) \leq  \beta \,|\cup_{C\in \cal C} (\partial_{G}C \cup \partial_{G_C}C)|.
\]
Now, let us consider the number $k$ of nodes rejecting $(G,\ell)$.
By construction, the nodes in $\cup_{C\in \cal C} (\partial_{G}C \cup \partial_{G_C}C)$ are exactly the nodes that are rejecting $(G,\ell)$, thus:
\[
k=|\cup_{C\in \cal C} (\partial_{G}C \cup \partial_{G_C}C)|.
\]
Finally, again by construction, the Hamming distance between $(G,\ell')$ and $(G,\ell)$ is at most the number of isolated rejecting nodes, which implies:
\[ 
\dist((G,\ell'),(G,\ell))\leq k.
\]
Putting all the pieces together we get:
\[
\dist((G,\ell),\cL) \leq (\beta+1) \, k.
\]
In other words, the universal proof-labeling scheme is error-sensitive, with parameter $\alpha=\frac{1}{\beta+1}$.
\end{proof}

Theorem \ref{prop:notallwork} can be viewed as a corollary of Theorem~\ref{thm:characterization} as it is easy to show that $\mbox{\sc regular}$ is not locally stable. Nevertheless, local stability may not always be as easy to establish, because it is based on merging an arbitrary large number of labeled graphs. 
We thus consider another property, called \emph{strong local stability}, which is easier to check, and which provides a sufficient condition for the existence of an error-sensitive proof-labeling scheme. 
Given two labeled graphs $(G,\ell)$ and $(G',\ell')$, and a subgraph $H$ of both $G$ and $G'$, we define a third labeling of $G$, that we call $\ell-\ell_H+\ell'_H$. 
For every node $v\in V(G)$:
\[
(\ell-\ell_H+\ell'_H)(v) = \left\{ \begin{array}{ll}
 \ell'_H(v) & \text{ if } v\in V(H), \\ 
 \ell(v) & \text{ otherwise.}
 \end{array}\right.
\]
To avoid double subscript, in the following we will sometimes use superscripts  instead of subscripts for sequences, e.g., $\ell^i$ instead of $\ell_i$.

\begin{definition}
A language $\cL$ is \emph{strongly locally stable} if there exists a constant $\beta>0$, such that, for every  graph $H$, and every two labeled graphs $(G,\ell)\in \cL$ and $(G',\ell')\in \cL$ admitting $H$ as a subgraph, the labeled graph $(G,\ell-\ell_H+\ell'_H)$ is at hamming distance at most $\beta \; |\partial_{G'}H+\partial_{G}H|$ from~$\cL$. 
\end{definition} 

The following theorem states that strong local stability is indeed a notion that is at least as strong as  local stability. 

\begin{theorem}
\label{thm:sufficientcond}
If a language $\cL$ is strongly locally stable, then it is locally stable. 
\end{theorem}

\begin{proof}
Let us consider a strongly locally stable language $\cL$, with parameter $\beta$,  and a labeled graph $(G,\ell)$ induced by labeled graphs $(G_i,\ell^{i})\in \cL$, $i=1,\dots,h$, via the subgraphs $H_1,\dots,H_h$.
We will establish that $\dist ((G,\ell),\cL) \leq \beta \; |\cup_{j=1}^h (\partial_{G}H_j \cup \partial_{G_j}H_j)|$, which is the condition of local stability.

For a labeling $\ell'$ of $G$, let $\dist_i((G,\ell),(G,\ell'))$ be the distance between the labelings $\ell$~and~$\ell'$, restricted to $\cup_{j=1}^{i} H_j$ 
(in other words the label differences in $\cup_{j=i+1}^{h} H_j$ do not count for $\dist_i$).
We consider two sequences  of labelings of $G$, $\rho^{i}$ for $i=0,\dots,h$, and $\mu^{i}$ for $i=1,\dots,h$. 
They are defined iteratively in the following way. 
We take $\rho^{0}$ to be an arbitrary labeling such that $(G,\rho^{0})\in \cL$. 
For $i\geq 1$,  $\rho^{i}$ is a labeling such that $(G,\rho^{i})\in \cL$, and:
\begin{equation}
\label{eq:hypothese-i}
\dist_i\left(
(G,\ell),
(G,\rho^{i})
\right)
\leq 
\beta \sum_{j=1}^{i} |\partial_G H_j \cup \partial_{G_j}H_j|.
\end{equation}
Finally, we set 
\[
\mu^{i}=\rho^{i-1} - \rho^{i-1}_{H_i} + \ell_{H_i},
\]
Note that this labeling satisfies the distance inequality, because $\dist_0$ is always zero.
To prove our result, it is sufficient to show that we can indeed define the sequence $\rho^{i}$, $i=0,\dots,h$. 
Indeed, if we get to $\rho^{h}$, then since $(G,\rho^{h})\in \cL$ and $\dist_h=\dist$, Equation~\ref{eq:hypothese-i} transforms into  $ \dist\left(
\left(G,\ell\right),
\cL
\right)
\leq 
\beta \sum_{j=1}^{h} |\partial_G H_j \cup \partial_{G_j}H_j|$, and because the sets 
$(\partial_G H_j \cup \partial_{G_j}H_j)_j$ 
are disjoint, the right-hand side is equal to 
$\beta  |\cup_{j=1}^{h}\partial_G H_j \cup \partial_{G_j}H_j|$, which is what we want.

By induction, suppose that we have built a proper $\rho^{i}$.
To define $\mu^{i+1}$, we take $\rho^i$ and copy the labeling of $\ell$ on a still untouched subgraph $H_{i+1}$. 
Therefore: 
\begin{equation}
\label{eq:mu-rho}
\dist_{i+1}\left(
(G,\ell),
(G,\mu^{i+1})
\right)
=
\dist_{i}\left(
(G,\ell),
(G,\rho^{i})
\right)
\leq 
\beta \sum_{j=1}^{i} |\partial_G H_j \cup \partial_{G_j}H_j|.
\end{equation}
We take $\rho^{i+1}$ to be a labeling such that $(G,\rho^{i+1})$ is in $\cL$, and the distance between $(G,\rho^{i+1})$ and  $(G,\mu^{i+1})$ is minimized.
By strong local stability, since $\ell$ and $\rho^{i}$ are legal labelings for $\cL$, we get that: 
\begin{equation}
\label{eq:strong}
\dist((G,\mu^{i+1}),(G,\rho^{i+1}))= \dist((G,\mu^{i+1}),\cL) \leq \beta 
|\partial_G H_{i+1} \cup \partial_{G_{i+1}}H_{i+1}|.
\end{equation}
Putting the Equations~\ref{eq:mu-rho} and~\ref{eq:strong} together, by triangle inequality, we get:
\[
\dist_{i+1}\left(
(G,\ell),
(G,\rho^{i+1})
\right)
\leq 
\beta \sum_{j=1}^{i+1} |\partial_G H_j \cup \partial_{G_j}H_j|.
\]
That is, we get Equation~\ref{eq:hypothese-i} at index $i+1$, which proves the theorem by induction. 
\end{proof}

In fact, strong local stability is  a notion strictly  stronger than local stability, although they coincide on bounded-degree graphs. 

\begin{theorem}
\label{prop:strictlyandbounded}
There are languages that are locally stable but not strongly locally stable. However, all locally stable languages on bounded degree graphs are strongly locally stable. 
\end{theorem}

\begin{proof}
Let us define a language $\cL$ to prove the first part of the theorem. 
As earlier in the paper (e.g., in the proof of Theorem~\ref{prop:notallwork}), a proper labeling $\ell$ for $\cL$ describes a set of edges $H_\ell$. Here, in addition, every node is also assigned a color: blue or red. 
The labeling is in the language $\cL$ if every connected component of $H_{\ell}$ is monochromatic.
 
This language has a proof-labeling scheme with empty certificates. 
The verifier simply checks that $H_{\ell}$ is well-defined, and that every neighbor in $H_\ell$ has been given the same color.
In addition, this scheme is error-sensitive. This is because, for every inconsistency in the description of $H_{\ell}$, or any edge of $H_{\ell}$ that is not monochromatic, both endpoints reject. As a consequence, if every rejecting node modifies its local description of $H_{\ell}$ by removing the faulty edges, the new labeling is in the language. In turn, this means that the distance from the language is upper bounded by the number of rejecting nodes.
By Theorem \ref{thm:characterization}, we know that the language $\cL$ is locally stable.  

We show that $\cL$ is not strongly locally stable. 
Consider a graph $G$ that is a star with $2p$ leaves. 
Now, consider two labelings $\ell$ and $\ell'$ where $H_{\ell}=H_{\ell'}=G$, and all the nodes are blue in $\ell$ and red in $\ell'$. 
Let $H$ be a subgraph of $G$ with the center and $p$ leaves.
We note that $(G,\ell-\ell_H+\ell'_H)$ is at distance $p$ from $\cL$. This is because the best we can do is to edit the labels of all the vertices of $G \setminus H$.
On the other hand, $\partial_{G}H$ contains only one node, the center.
As we can make $p$ arbitrarily large, the condition of strong local checkability cannot be fulfilled.

We now show that all locally stable languages on bounded degree graphs are strongly locally stable. Let $\Delta\geq 1$, and let ${\cal F}_\Delta$ be the family of graphs with maximum degree $\Delta$. Let $\cL$ be a locally stable language on graphs in ${\cal F}_\Delta$. 
Let us consider a connected graph $H$, and two labeled graphs $(G,\ell)\in \cL$ and $(G',\ell')\in \cL$, with $G\in {\cal F}_\Delta$, and $G'\in {\cal F}_\Delta$, both admitting $H$ as a subgraph. 
Let $(G,\ell-\ell_H+\ell'_H)$ be the labeled graph induced by $(G,\ell)$ and $(G',\ell')$ via the subgraph $H$. 
We view $(G,\ell-\ell_H+\ell'_H)$ as induced by $(G,\ell)$ and $(G',\ell')$ via the subgraphs $G\setminus H$ and $H$. 
By local stability, we get that the distance from $(G,\ell-\ell_H+\ell'_H)$ to $\cL$ is at most $\beta \; |\left(\partial_{G}H \cup \partial_{G'}H\right) \cup \left(\partial_{G}(G \setminus H)\right)|$. 
(Note that for $G\setminus H$, $G$ is both the original graph, and the one that induces the labeling, hence there is just one border to consider.) 
Now,  $|\partial_{G}(G \setminus H)|\leq \Delta |\partial_{G}H|$, because each edge from the cut $(H,G\setminus H)$ must have an endpoint in $H$ and these endpoints have  degree at most $\Delta$. As a consequence the distance from $(G,\ell-\ell_H+\ell'_H)$ to $\cL$ is at most $\beta(\Delta+1)|\partial_{G}H \cup \partial_{G'}H|$, and the strong local stability follows.
\end{proof}

We do not have examples of ``natural'' languages that are locally stable but not strongly locally stable. In fact, the rest of this section is devoted to using strong local checkability applied to various ``natural'' languages.
Let us give an example where strong local stability is useful for easily proving error-sensitivity. 
Consider the following language.

\begin{align*}
\mbox{\sc leader}= \big\{(G,\ell) : \; 
& \forall v\in V(G), \, \ell(v)\in\{0,1\}, \;\\
& \mbox{and there exists a unique $v\in V(G)$ for which $\ell(v)=1$} \big\}.
\end{align*}

\begin{corollary}\label{coro:leader}
$\mbox{\sc leader}$  admits an error-sensitive proof-labeling scheme. 
\end{corollary}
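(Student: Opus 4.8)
The plan is to apply Theorem~\ref{thm:characterization} together with Lemma~\ref{lem:sufficientcond}: to obtain an error-sensitive proof-labeling scheme for $\mbox{\sc leader}$, it suffices to show that $\mbox{\sc leader}$ is \emph{locally stable}, and the most convenient route is to establish the stronger property of \emph{strong local stability}. So the goal reduces to exhibiting a constant $\beta>0$ such that, for any two labeled graphs $(G,\ell)\in\mbox{\sc leader}$ and $(G',\ell')\in\mbox{\sc leader}$ sharing a common subgraph $H$, the labeled graph $(G,\ell-\ell_H+\ell'_H)$ is at edit-distance at most $\beta\,|\partial_{G'}H+\partial_GH|$ from $\mbox{\sc leader}$.

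First I would unpack what the spliced labeling $\ell-\ell_H+\ell'_H$ looks like. Since each labeling assigns a single bit per node, and each of $(G,\ell)$ and $(G',\ell')$ is legal, there is exactly one node carrying a $1$ in $G$ under $\ell$, and exactly one node carrying a $1$ in $G'$ under $\ell'$. After the splice, nodes outside $H$ keep their $\ell$-value and nodes inside $H$ adopt their $\ell'$-value. The number of $1$'s in $(G,\ell-\ell_H+\ell'_H)$ is therefore determined by whether the unique $\ell$-leader lies inside or outside $H$, and whether the unique $\ell'$-leader lies inside $H$. The key observation is that the total number of $1$'s in the spliced labeling is between $0$ and $2$; hence the spliced instance has either no leader, exactly one leader, or exactly two leaders.

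The crux is then a trivial edit-distance bound: any labeled graph with at most $2$ ones can be repaired to have \emph{exactly one} one by flipping at most one bit (add a $1$ if there are none, remove a $1$ if there are two). Thus the edit-distance from $(G,\ell-\ell_H+\ell'_H)$ to $\mbox{\sc leader}$ is at most $1$, \emph{regardless} of the sizes of $G$, $G'$, and $H$. In particular it is at most $1\le \beta\,|\partial_{G'}H+\partial_GH|$ for $\beta=1$, using that the boundary term is at least $1$ whenever $H$ is a proper nontrivial subgraph (and when the boundary is empty the two graphs coincide with $H$ and the spliced labeling is already legal, so the distance is $0$). This establishes strong local stability with $\beta=1$, whence local stability by Lemma~\ref{lem:sufficientcond}, whence an error-sensitive proof-labeling scheme by the ``only if is locally stable $\Rightarrow$ error-sensitive'' direction of Theorem~\ref{thm:characterization}.

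I do not expect any genuine obstacle here: unlike $\mbox{\sc regular}$, where splicing can force a linear number of degree repairs, the global constraint ``exactly one $1$'' is a counting constraint whose violation after any splice is bounded by an absolute constant independent of the instance size. The only point requiring a little care is the degenerate boundary case (empty boundary, or the spliced labeling already legal), which I would handle by noting the distance is then $0$ and the inequality holds vacuously. Everything else is the one-line counting argument above.
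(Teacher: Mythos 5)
Your proof is correct and follows exactly the route the paper intends: the paper derives Corollary~\ref{coro:leader} ``immediately'' from Theorem~\ref{thm:characterization} and Lemma~\ref{lem:sufficientcond}, i.e., by strong local stability, and your counting argument (a spliced labeling has $0$, $1$, or $2$ ones, hence is at edit-distance at most $1$ from $\mbox{\sc leader}$) is precisely the verification the paper leaves implicit. Your handling of the empty-boundary case is also sound, since an empty boundary forces $H=G=G'$ and the spliced labeling is then legal.
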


\begin{proof}
Consider an arbitrary graph $H$, and two labeled graphs $(G,\ell)$ and $(G',\ell')$ in $\mbox{\sc leader}$. 
On the one hand, in $(G,\ell-\ell_H+\ell'_H)$, there can be only 0, 1, or 2 vertices with $\ell(v)=1$. 
On the other hand, $|\partial_{G'}H+\partial_{G}H|$ is at least 1, by connectivity. 
Therefore we get that $(G,\ell-\ell_H+\ell'_H)$ is at Hamming distance at most $2 \; |\partial_{G'}H+\partial_{G}H|$ from the language, thus that language is strongly locally stable, and the corollary follows from Theorems~\ref{thm:characterization}
and~\ref{thm:sufficientcond}. 
\end{proof}

Also, one can show that the language $\ST_l$ of spanning trees, whenever encoded by adjacency lists, admits an error-sensitive proof-labeling scheme, in contrast to Theorem~\ref{prop:STnot}. 

\begin{corollary}
\label{coro:ST-via-characterization}
$\mbox{\sc st}_l$  admits an error-sensitive proof-labeling scheme. 
\end{corollary}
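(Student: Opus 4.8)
The plan is to invoke Theorem~\ref{thm:characterization} together with Lemma~\ref{lem:sufficientcond}: rather than establishing local stability directly, I would prove that $\ST_l$ is \emph{strongly} locally stable, which by Lemma~\ref{lem:sufficientcond} implies local stability, which in turn by Theorem~\ref{thm:characterization} yields an error-sensitive proof-labeling scheme. The crucial reason this works for $\ST_l$ but not for $\ST_p$ (Proposition~\ref{prop:STnot}) is that the adjacency-list encoding is \emph{symmetric}: the condition ``$u\in\ell(v)$ iff $v\in\ell(u)$'' means an edge of the tree is recorded identically by both endpoints, so copy-pasting a legal labeling into a subgraph $H$ only creates inconsistencies at the boundary $\partial_G H$, where an endpoint lies inside $H$ but its partner lies outside. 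There is no global orientation that must be repaired coherently across the whole structure, which is precisely the defect exploited in the $\ST_p$ lower bound.

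The key steps, in order, are as follows. First I would fix a subgraph $H$ and two legal labeled graphs $(G,\ell)\in\ST_l$ and $(G',\ell')\in\ST_l$ both containing $H$, and examine the composite labeling $\ell-\ell_H+\ell'_H$ on $G$. Every node strictly interior to $H$ (i.e.\ not in $\partial_G H\cup\partial_{G'}H$) keeps a label $\ell'_H(v)\subseteq N_{G'}(v)$, and every node strictly outside $H$ keeps its label $\ell(v)\subseteq N_G(v)$; for these nodes the adjacency-list entries are still consistent with their partners and still well-formatted within $G$. Second, I would identify exactly which edges of the induced subgraph $H_{\ell-\ell_H+\ell'_H}$ can be malformed: these are precisely the tree-edges incident to a boundary node, where one endpoint was relabeled and the other was not, possibly producing a one-sided entry ($u\in\ell(v)$ but $v\notin\ell(u)$), or an entry pointing to a non-neighbor of $G$. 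Third, I would exhibit an explicit repair: by editing only the labels of the boundary nodes in $\cup(\partial_G H\cup\partial_{G'}H)$ --- for instance deleting the offending incident entries and reattaching each affected piece to rebuild a single spanning tree of $G$ --- one obtains a legal element of $\ST_l$. The number of edited labels is $O(|\partial_G H\cup\partial_{G'}H|)$, giving the required bound with some constant $\beta$.

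The main obstacle I anticipate is the third step: merely deleting the boundary inconsistencies can disconnect the structure or leave several tree fragments, so one must argue that these $O(|\partial H|)$ fragments can be \emph{reconnected into a single spanning tree} by modifying only boundary-node labels, without touching the interior. The count has to stay proportional to the boundary size and independent of $|V(H)|$, so I would need to bound the number of fragments created by the number of severed boundary edges and then add back one tree-edge per fragment to merge them, charging each such addition to a boundary node. A subtle point is that this reconnection may require adding an edge incident to a boundary node whose degree in $G$ is large, but since each added edge is charged to a distinct boundary node the total edit count remains $O(|\partial H|)$; this is exactly the kind of boundary-local surgery that succeeds for the symmetric list encoding and fails for the oriented pointer encoding.
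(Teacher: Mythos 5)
Your overall route is the same as the paper's: prove that $\ST_l$ is strongly locally stable and invoke Lemma~\ref{lem:sufficientcond} together with Theorem~\ref{thm:characterization}; your first repair step (editing the boundary labels so that all entries are consistent with $G$ and with the neighbors' lists) is also exactly the paper's first step. However, there is a genuine gap in your second and third steps: you claim the only defects of the composite labeling are \emph{malformed} entries incident to boundary nodes (one-sided entries, or entries pointing to non-neighbors), and your repair only deletes such entries and then reconnects tree fragments. This misses that the composite structure can contain \emph{cycles made entirely of well-formed, mutually consistent entries}. Concretely, let $G=G'$ be the $5$-cycle $u_1,u_2,v_2,w,v_1$, let $H$ consist of the single edge $\{u_1,u_2\}$, let $\ell$ encode the spanning tree $\{u_2v_2,\,v_2w,\,wv_1,\,v_1u_1\}$ and $\ell'$ the spanning tree $\{u_1u_2,\,u_2v_2,\,u_1v_1,\,v_2w\}$. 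In the composite labeling $\ell-\ell_H+\ell'_H$, every one of the five edges of $G$ is listed by both of its endpoints, so the encoded subgraph is the entire $5$-cycle: there is not a single offending entry to delete, no disconnection, and yet the labeling is not in $\ST_l$. Your procedure would terminate without modifying anything and output a non-member, so the claimed bound on the edit-distance does not follow from your argument. Such cycles arise whenever a path of $\ell'$-edges inside $H$ and a path of $\ell$-edges outside $H$ join the same pair of boundary nodes (the cross edges only need to lie in both trees), and there can be $\Theta(|\partial_GH\cup\partial_{G'}H|)$ many of them.

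What is missing is an argument bounding the number of edge \emph{removals} needed to destroy all such cycles. The paper gets this by observing that, after the boundary cleanup, the edge set $F$ is the union of two forests --- the $\ell$-edges on the graph $\widehat{G}$ obtained by deleting $E(H)$ and the interior of $H$, and the $\ell'$-edges on $H$ --- and that every tree of either forest contains a boundary node. A counting argument on nodes, edges, and components then shows that deleting at most $3|\partial_GH\cup\partial_{G'}H|$ edges turns $F$ into a spanning forest, after which at most $|\partial_GH\cup\partial_{G'}H|-1$ edges are added to reconnect it, giving strong local stability with a constant close to $9$. Your fragment-reconnection and charging scheme is fine for the connectivity half, but you need this (or an equivalent) cycle-counting step before it; without it the proof does not go through.
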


\begin{proof}
We show that $\mbox{\sc st}_l$ is strongly locally stable. Let us consider two labeled graphs $(G,\ell)\in \mbox{\sc st}_l$ and $(G',\ell')\in\mbox{\sc st}_l$, both admitting $H$ as a subgraph. We show that $(G,\ell-\ell_H+\ell'_H)$ is not far from $\cL$. For this purpose, we aim at modifying the labels of few nodes so that to form a spanning tree of $G$. First, for every node $u \in \partial_{G}H \cup \partial_{G'}H$, we modify $\ell'_H(u)$ such that the label of $u$ becomes consistent with its neighborhood in $G$. That is, all edges listed in the label exist in $G$, and they match edges listed by the neighbors of $u$ in $G$. After this modification, which impacts only $|\partial_{G}H \cup \partial_{G'}H|$ nodes, the resulting labeling of the nodes in $G$ encodes a set of edges $F\subseteq E(G)$. However, $F$ may not be a spanning tree, as it may include cycles, and may even be not connected.  

Let $\widehat{G}$ be the graph obtained from $G$ after removing all edges in $E(H)$, and all nodes in $V(H)\setminus (\partial_GH \cup \partial_{G'}H)$. Note that $V(H) \cup V(\widehat{G})=V(G)$ and $V(H) \cap V(\widehat{G})=\partial_GH \cup \partial_{G'}H$. 
The set $F$ is equal to the union of the edges described by $\ell$ on $\widehat{G}$, and of the edges described by $\ell'$ on~$H$. Indeed consider an edge $e\in F$. If both endpoints of $e$ are in $\widehat{G}$, then this edge is encoded by $\ell$ at its two endpoints, as the labels of these endpoints are copied from $\ell$, and the modification of $\ell-\ell_H+\ell'_H$ performed at the nodes in $\partial_{G}H \cup \partial_{G'}H$ does not impact such nodes. If $e$ has both endpoints in $H \setminus (\partial_GH \cup \partial_{G'}H)$ then, by the same reasoning,  this edge is encoded by $\ell'$ at its two endpoints. If $e$ has both endpoints in $\partial_GH \cup \partial_{G'}H$, then the modification of $\ell-\ell_H+\ell'_H$ performed at the nodes in this latter set did not affected edge $e$, which implies that $e$ was originally encoded in $\ell'$. Finally, if $e$ has one endpoint in $\partial_GH \cup \partial_{G'}H$, and the other one outside $\partial_GH \cup \partial_{G'}H$, then, from by the modification of $\ell-\ell_H+\ell'_H$, the edge $e$ was present in $\ell$ in at least one of its extermities. 
 
As $\ell$ is the labelling of a spanning tree of $G$, $F$ restricted to $\widehat{G}$ is a spanning forest of $\widehat{G}$. Similarly, as $\ell'$ is a spanning tree of $G'$, $F$ restricted to $H$ is a spanning forest of $H$. Also, since $V(\widehat{G})\cap V(H)=\partial_GH \cup \partial_{G'}H$, it follows that, in both forests, every tree contains a node of $V(\widehat{G})\cap V(H)$. Let us denote by $n_{\widehat{G}}$, $m_{\widehat{G}}$, and $s_{\widehat{G}}$ the number of nodes, edges, and connected components of $F$ restricted to $\widehat{G}$, respectively. Similarly, let us denote by $n_H$, $m_H$, and $s_H$ the same parameters for~$H$. Since the connected components of $F$ restricted to $\widehat{G}$, and to $H$, are forests, we get that:
\begin{equation}\label{eq1}
m_{\widehat{G}}=n_{\widehat{G}}-s_{\widehat{G}}, 
\text{ and }
m_H=n_H-s_H.
\end{equation}
Moreover, since each connected component contains a node of the border, we get 
\begin{equation}\label{eq2}
s_{\widehat{G}}\leq |V(\widehat{G})\cap V(H)|, \; \mbox{and} \; s_H\leq |V(\widehat{G})\cap V(H)|.
\end{equation}
Now, let us  consider the whole set $F$, and let us define $n_F$, $m_F$, and $s_F$ as the number of nodes, edges, and connected components of $F$, respectively.  By definition, $m_F=m_{\widehat{G}}+m_H$. Thus, by Eq.~\eqref{eq1}, we get that 
\[
m_F=n_{\widehat{G}}+s_{\widehat{G}}+n_H+s_H.
\]
Moreover, by definition, $n_F=n_{\widehat{G}}+n_H-|V(\widehat{G})\cap V(H)|$. Therefore, 
\[
m_F=n_F+|V(\widehat{G})\cap V(H)|+s_{\widehat{G}}+s_H.
\] 
We can now bound the number of edges that we need to remove from $F$ in order to get a spanning forest (with the same number of connected components). For such a forest, it must hold that its number of edges, $m$, satisfies $m=n_F+s_F$. Therefore, 
\begin{eqnarray*}
m_F -m & = & (n_F+|V(\widehat{G})\cap V(H)|+s_{\widehat{G}}+s_H)-(n_F+s_F) \\
& \leq &  |V(\widehat{G})\cap V(H)|+s_{\widehat{G}}+s_H \\
& \leq & 3|V(\widehat{G})\cap V(H)|,
\end{eqnarray*}
where the last equality holds by Eq.~\eqref{eq2}.
Thus, by removing at most $3|\partial_GH \cup \partial_{G'}H|$ edges from $F$, we get a spanning forest of $G$ with at most $|\partial_GH \cup \partial_{G'}H|$ connected  components. 
Therefore, by adding $|\partial_GH \cup \partial_{G'}H|-1$ edges, one can construct a spanning tree of $G$. 
So, in total, transforming $F$ into a spanning tree required to modify at most $4|\partial_GH \cup \partial_{G'}H|$ edges. 
This may impact the labels of at most $8|\partial_GH \cup \partial_{G'}H|$ nodes. 
As the labels of the nodes in  $\partial_GH \cup \partial_{G'}H$ were also modified at the very beginning of the construction, it follows that the number of node labels impacted by our spanning tree construction is at most $9|\partial_GH \cup \partial_{G'}H|$. 
It follows that $\mbox{\sc st}_l$ is strongly locally stable with parameter at most~9, which implies that it admit an error-sensitive proof-labeling scheme with sensitivity parameter at least $\frac{1}{9}$, by Theorem~\ref{thm:characterization}, and Theorem~\ref{thm:sufficientcond}.
\end{proof}

Also, Theorem~\ref{thm:characterization} allows us to prove that minimum-weight spanning tree (MST) is error-sensitive (whenever the tree is encoded locally by adjacency lists). More specifically, let 
\begin{equation}\label{eq:mstlist}
\mbox{\sc mst}_l = \Big\{(G,\ell) : \forall v\in V(G), \, \ell(v)\subseteq N(v) \; \mbox{and} \;\Big ( \bigcup_{v\in V(G)}\bigcup_{u\in \ell(v)} \{u,v\}\Big ) \; \mbox{forms a MST} \Big\}. 
\end{equation}

\begin{corollary}\label{cor:mstlocstable}
$\mbox{\sc mst}_l$  admits an error-sensitive proof-labeling scheme. 
\end{corollary}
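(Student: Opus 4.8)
The plan is to prove that $\mbox{\sc mst}_l$ is strongly locally stable, and then invoke Lemma~\ref{lem:sufficientcond} together with Theorem~\ref{thm:characterization} to conclude the existence of an error-sensitive proof-labeling scheme. This mirrors the strategy used for $\mbox{\sc st}_l$ in Corollary~\ref{coro:ST-via-characterization}, and indeed the combinatorial core --- bounding how many edge modifications are needed to turn a cut-and-pasted union of two spanning trees into a single spanning tree --- can be imported essentially verbatim. The extra content specific to MST is showing that the \emph{minimality} of the weight can also be restored at a cost proportional to the boundary size, rather than proportional to the number of nodes.

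First I would set up the same framework as in Corollary~\ref{coro:ST-via-characterization}: take $(G,\ell)\in\mbox{\sc mst}_l$ and $(G',\ell')\in\mbox{\sc mst}_l$ sharing a common subgraph $H$, form $(G,\ell-\ell_H+\ell'_H)$, and first repair the labels of the boundary nodes in $\partial_GH\cup\partial_{G'}H$ so that the encoded edge set $F$ is consistent with $G$. Exactly as in the spanning-tree argument, $F$ decomposes as a spanning forest of $\widehat{G}$ glued to a spanning forest of $H$, and by the same counting with Eq.~\eqref{eq1} and Eq.~\eqref{eq2} one can convert $F$ into \emph{a} spanning tree $T$ of $G$ by modifying only $O(|\partial_GH\cup\partial_{G'}H|)$ edges, hence only $O(|\partial_GH\cup\partial_{G'}H|)$ node labels. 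The point is that this phase is completely weight-oblivious and costs us nothing beyond the boundary.

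The hard part will be the second phase: the spanning tree $T$ produced above need not be of minimum weight. The key structural observation to exploit is that $T$ is assembled from a sub-forest of an MST of $G$ (on $\widehat{G}$) and a sub-forest of an MST of $G'$ (on $H$), so $T$ is ``locally optimal'' away from the boundary. I would argue via the cycle property of MSTs: an edge $e\in T$ is replaceable by a lighter edge $f\notin T$ only if $e$ lies on the unique $T$-path between the endpoints of some $f$, and such an improving swap must involve a discrepancy between $G$ and $G'$, which can only arise near the cut $\partial_GH\cup\partial_{G'}H$. Concretely, I would bound the number of ``bad'' tree edges whose removal-and-replacement is needed to reach an MST, showing it is $O(|\partial_GH\cup\partial_{G'}H|)$, because every improving swap can be charged to a boundary edge separating the $\widehat{G}$-part from the $H$-part. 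Each such swap changes a constant number of node labels, so the total edit-distance to $\mbox{\sc mst}_l$ remains $O(|\partial_GH\cup\partial_{G'}H|)$.

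Combining the two phases yields a bound of the form $\dist((G,\ell-\ell_H+\ell'_H),\mbox{\sc mst}_l)\leq\beta\,|\partial_GH\cup\partial_{G'}H|$ for some absolute constant $\beta$, which is exactly strong local stability. I expect the charging argument in the second phase to be the main obstacle, since one must verify that improving swaps genuinely cannot propagate far from the cut --- both forests being restrictions of \emph{global} minimum spanning trees of $G$ and $G'$ respectively is what confines the suboptimality to the boundary, and making this precise (e.g.\ by ordering the candidate swaps and arguing each resolves a distinct boundary-induced violation) is the delicate step. Once strong local stability is established, Lemma~\ref{lem:sufficientcond} gives local stability, and Theorem~\ref{thm:characterization} immediately yields the claimed error-sensitive proof-labeling scheme for $\mbox{\sc mst}_l$.
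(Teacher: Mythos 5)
Your scaffolding matches the paper's proof: you reduce to strong local stability (then invoke Lemma~\ref{lem:sufficientcond} and Theorem~\ref{thm:characterization}), you perform the same boundary repair, and you reuse the counting from Corollary~\ref{coro:ST-via-characterization} for the purely topological part. The gap is your second phase, which is the entire MST-specific content of the corollary. You assert that every improving swap ``can be charged to a boundary edge'' and yourself flag this as the delicate step, but no charging is given, and it does not follow from the cycle-property observation you make. The cycle property only tells you that every edge of $E(T')\cap E(H)$ that lies in no minimum spanning tree of $G$ has a witnessing cycle passing through $\partial_G H\cup\partial_{G'}H$; it does not produce an injection from such ``bad'' edges to boundary elements, and ruling out cascades --- many bad interior edges all witnessed through the same few boundary vertices --- is precisely what has to be proved. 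Note also that comparing cardinalities of the two forests $E(T')\cap E(H)$ and (an MST of $G$)$\,\cap E(H)$ cannot suffice: two forests of equal size can be edge-disjoint, so one needs a containment statement, not a counting statement.

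The paper supplies exactly that containment via a sandwiching argument. It introduces two auxiliary Kruskal runs: on $G_1$, which is $H$ plus a supernode joined to every boundary vertex by edges lighter than all others (morally, $H$ with its boundary contracted), yielding $T_1$; and on $G_2$, a copy of $H$ alone, yielding $T_2$. Claim~\ref{claim:MST-inclusions}, proved by induction along a single ordering $\cO_{\text{tot}}$ consistent with all runs while maintaining the invariants P1--P3, shows that both the transplanted tree $T'$ and a suitably chosen MST $T_3$ of $G$ satisfy $E(T_1)\cap E(H)\subseteq E(T_i)\cap E(H)\subseteq E(T_2)\cap E(H)$ (for $T_i\in\{T',T_3\}$); since the two extremes differ by at most $|\partial_G H\cup\partial_{G'}H|$ edges, so do $T'$ and $T_3$ on $H$. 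This common-core-plus-small-interval structure is what a rigorous version of your charging must produce, and it is genuinely the hard part, not a routine verification. A secondary issue: your argument implicitly presupposes a unique MST of $G$ to swap towards (distinct weights), whereas the paper's corollary handles ties by choosing Kruskal tie-breaking orderings $\cO$, $\cO'$, $\cO_3$, in particular giving priority to $E(T)\cap E(\widehat{G})$ so that $T_3$ agrees with the original tree on $\widehat{G}$.
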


\begin{proof}
We show that $\mbox{\sc mst}_l$ is strongly locally testable. 
Let us consider a graph $H$, and two labeled graphs $(G,\ell)\in \mbox{\sc mst}_l$ and $(G',\ell')\in \mbox{\sc mst}_l$ admitting $H$ as a subgraph. We show that the labeled graph $(G,\ell-\ell_H+\ell'_H)$ is not far from $\mbox{\sc mst}_l$. 
Let $T$ be the spanning tree of $G$ defined by the set 
of edges defined by $\ell$, and let $T'$ be the spanning tree of $G'$ defined by the set of edges defined by $\ell'$. Let $F$ the edge set defined by $\ell-\ell_H+\ell'_H$ on $G$, after the same modification of that labeling on the nodes of $\partial_{G}H \cup \partial_{G'}H$ as in the proof of Corollary \ref{coro:ST-via-characterization}, i.e., the labels of $\partial_{G}H \cup \partial_{G'}H$ are modified so that the adjacency lists of these nodes in their labels match the labels of their neighbors.
Let $\widehat{G}$ be the graph defined as in the proof of Corollary \ref{coro:ST-via-characterization}, that is, $\widehat{G}$ is the graph obtained from $G$ after removing all edges in $E(H)$, and all nodes in $V(H)\setminus (\partial_GH \cup \partial_{G'}H)$.
Note that $F$ is obtained from the union of the two forests that came from $\ell$ and $\ell'$, on  $E(\widehat{G})$ and $E(H)$, respectively. Hence,  every connected component of $F$ contains a node in $\partial_GH \cup \partial_{G'}H$. 

Recall that Kruskal algorithm constructs an MST by considering the edges in increasing order of their weights, and by adding the currently considered edge to the current set of edges if and only if this edge does not create a cycle with the previously added edges. 
It is known that every MST of a graph can be generated by Kruskal algorithm, by breaking ties between edges of identical weight in a way to  add all edges of the desired MST.
Let $\cO$ be the ordering of the edges of $G$ that leads to the tree $T$, and let $\cO'$ be the ordering of the edges of $G'$ that leads to the tree $T'$. 
Let $\cO'_H$, be the same ordering as $\cO'$ but restricted to the edges of~$H$. 

Let $G_1$ be the graph obtained from $H$ by adding a new node $u$ connected to every node of $\partial_GH+\partial_{G'}H$ by edges with weights smaller than the smallest weight in $E(G)$ and in $E(G')$. 
Let $\cO_1$ be the ordering of $E(G_1)$ obtained by concatenating $\cO_H'$ to an arbitrary ordering of the edges incident to $u$. Let $T_1$ be the MST of $G_1$ that Kruskal algorithm constructs in $G_1$ when it uses the ordering $\cO_1$. 
Also let $G_2$ be a copy of $H$, let $T_2$ be the MST constructed by Kruskal algorithm on $G_2$ using $\cO_2=\cO_H'$. 
Finally, we define the ordering $\cO_3$ of the edges of $G$ as the ordering such that the edges of $E(\widehat{G})$ appear in the same order as in $\cO$, the edges of $E(H)$ appear in the same order as in $\cO'$, and the edges of $E(T) \cap E(\widehat{G})$ have priority. Let $T_3$ be the spanning tree defined by Kruskal algorithm on $G$ with $\cO_3$. 
$T_3$ is necessarily equal to $T$ on the edges of $\widehat{G}$ because they are MST of the same graph, and because the edges of $E(T) \cap E(\widehat{G})$ have priority in $\cO_3$. We claim the following: 

\begin{claim}\label{claim:MST-inclusions}
The following inclusions hold.
\[
E(T_1) \cap E(H) \subseteq E(T') \cap E(H) \subseteq E(T_2) \cap E(H).
\]
\[
E(T_1) \cap E(H) \subseteq E(T_3) \cap E(H) \subseteq E(T_2) \cap E(H).
\]
\end{claim}

Before proving Claim \ref{claim:MST-inclusions}, let us show how to complete the proof using that Claim. By Claim~\ref{claim:MST-inclusions}, on $H$, $T_3$ can be transformed into $T'$ by changing only edges of $E(T_2) \setminus E(T_1)$. Moreover $E(T_2) \cap E(H)$ and $E(T_1) \cap E(H)$ are a spanning forests of $H$ with at most $|\partial_{G}H \cup \partial_{G'}H|$ trees in it, because, as in the proof of Corollary \ref{coro:ST-via-characterization}, every tree contains at least a node of $\partial_{G}H \cup \partial_{G'}H$. We get that 
\[
|(E(T_2) \cap E(H)) \setminus (E(T_1) \cap E(H))|\leq |\partial_{G}H \cup \partial_{G'}H|.
\] 
Therefore, restricted to the graph $H$, the tree $T_3$ can be transformed into the tree $T'$ by adding or removing at most $|\partial_{G}H \cup \partial_{G'}H|$ edges. 
Now, as $T_3$ is equal to $T$ on $\widehat{G}$, $E(T_3)$ can be transformed into $F$ by changing at most $|\partial_{G}H \cup \partial_{G'}H|$ edges. Thus $F$ is at Hamming distance at most $2|\partial_{G}H \cup \partial_{G'}H|$ from a MST of $G$. Since the modification we made at the very beginning to ensure the consistency of the labels affected at most $|\partial_{G}H \cup \partial_{G'}H|$ nodes, it follows that the Hamming distance from $(G,\ell-\ell_H+\ell'_H)$ to the language is most $3|\partial_{G}H \cup \partial_{G'}H|$, and thus the language is strongly locally stable. This completes the proof of Corollary~\ref{cor:mstlocstable}, assuming Claim \ref{claim:MST-inclusions}. 

\medbreak

It just remains to prove Claim \ref{claim:MST-inclusions}. We show the two sets of inclusion at once. Let  $M$ be either $E(T')$ or $E(T_3)$, and let $\Omega$ be the ordering of the edges which makes Kruskal algorithm build $T'$ or $T_3$. 
Note that, by construction, $\Omega$, $\cO_1$, and $\cO_2$ are consistent on the edges that they have in common, i.e., on all the edges of $E(H)$. Let $\cO_{\text{tot}}$ be an ordering that is consistent with the three orderings $\Omega$, $\cO_1$ and $\cO_2$. 
We can run Kruskal algorithm on the three instances $G$, $G_1$ and $G_2$ with $\cO_{\text{tot}}$. Let $i\geq 1$, and let $M_1^{(i)}$, $M_2^{(i)}$ and $M^{(i)}$, be the subset of edges in $E(T_1)$, $E(T_2)$, and $M$, respectively, that have been added  to the current tree by Kruskal algorithm before considering the $i$th edge in $\cO_{\text{tot}}$. We show, by induction on $i$, that the three following properties hold for every $i\geq 1$:

\begin{description}
\item[P1:] $M_1^{(i)} \cap E(H) \subseteq M^{(i)} \cap E(H) \subseteq M_2^{(i)} \cap E(H)$; 
\item[P2:] if two nodes of $H$ are linked by a path in $M_2^{(i)}$ then they are  linked by a path in~$M^{(i)}$;
\item[P3:] if two nodes of $H$ are linked by a path in $M^{(i)}$ then they are  linked by a path in~$M_1^{(i)}$.
\end{description}

These properties are trivially true for $i=1$, as all sets $M_1^{(1)}$, $M_2^{(1)}$ and $M^{(1)}$ are empty. Suppose that P1, P2, and P3 hold are true for $i-1$, and consider $i$-th edge $e=\{u,v\}$ considered by Kruskal algorithm in $\cO_{\text{tot}}$ for $T_1$, $T_2$ and $T'$ or $T_3$. We consider two cases. 

Consider first the case where $e \notin E(H)$. Then $e$ appears either only in $\cO_1$, or only in $\Omega$.
If $e$ appears only in $\cO_1$, then independently of whether Kruskal algorithm takes $e$ or not, the three properties P1, P2, and P3 hold for $i$. 
If $e$ appears only in $\Omega$, then, clearly, P1 and P2 hold for $i$. 
The only scenario for which P3 may not hold for $i$ is if $e$ is added to $M$, and this addition creates a new path between two nodes $x$ and $y$ of $H$, while there are no paths between $x$ and $y$ in $M_1^{(i)}$. Let us show that this does not happen. Indeed, since $e\notin E(H)$, such a path must pass through the border of $H$, which is included in $\partial_GH\cup\partial_G'H$ (this holds for both choices for $M$, that is, either $E(T')$ or $E(T_3)$). 
In particular adding $e$ to the set of edges taken by Kruskal algorithm so far connects two nodes of the border of $H$. Now, all the nodes of $\partial_GH\cup\partial_G'H$ are already connected in $M_1^{(i)}$. Indeed,  the edges of $E(G_1)\setminus E(H)$ have smaller weights. Therefore, all the nodes of $\partial_GH\cup\partial_G'H$ are connected in $M_1^{(i)}$, and thus it is not possible that there is a path created by adding $e$ in $M$ that does not already exists in $M_1^{(i)}$.

Second, consider the case $e\in E(H)$. Then $e$ appears in all the orderings. Let us consider two subcases depending on whether or not $e$ is taken in $M$.
\begin{itemize}
\item If $e$ is taken in $M$, then $e$ is not closing a cycle in $M^{(i-1)}$, and thus, thanks to P2, $e$ is not closing any cycle in $M_2^{(i-1)}$ either. Thus $e$ is also taken in $T_2$, and P1 holds. P2 still holds as well since $e$ is added to both sets. If $e$ is taken in $T_1$ then P3 holds. Instead if $e$ is not taken in $T_1$, then its two extremities were already linked by a path, and P3 also holds.

\item If $e$ is not taken in $M$, then $e$ closes a cycle in $M^{(i-1)}$. Therefore, by P3, $e$ also closes a cycle in $M_1^{(i-1)}$, and thus it is not taken in $T_1$ either, and P1 holds. P3 still holds as we have added no edges to $M$. If $e$ is not taken in $T_2$ then P2 holds. And if $e$ is  taken in $T_2$, then the fact that $e$ is not taken in $M$ implies that the nodes were already connected, and thus again P2 holds.
\end{itemize}
This completes the proof of Claim \ref{claim:MST-inclusions}, and thus the proof of Corollary~\ref{cor:mstlocstable}. 
\end{proof}

\section{Compact error-sensitive proof-labeling schemes}
\label{sec:compact}

The characterization of Theorem~\ref{thm:characterization} together with Lemma~\ref{lem:univ} implies an upper bound of $O(n^2)$ bits on the certificate size for the design of error-sensitive proof-labeling schemes for locally stable distributed languages. In this section, we show that the certificate size can be drastically reduced in certain cases. 
As said in the introduction, we focus on the spanning tree and minimum spanning tree problems, as they play a central role in the theory of proof-labeling schemes, and in distributed computing in general.
It is known that these languages have proof-labeling schemes using respectively certificates of  $\Theta(\log n)$ bits~\cite{AwerbuchPV91,KormanKP10}, and $\Theta(\log^2 n)$ bits~\cite{KormanK07}. We show that these schemes are actually error-sensitive.  

Recall that Theorem~\ref{prop:STnot} proved that the language $\ST_p$ of spanning trees encoded at each node by a pointer to its parent does not admit any error-sensitive. Hence, we are interested in $\ST_{l}$, i.e., the language of spanning trees encoded by adjacency lists.
 
\begin{theorem}\label{theo:STlogn} 
$\ST_{l}$ has an error-sensitive proof-labeling scheme with certificates of size $O(\log n)$ bits, and sensitivity $\frac{1}{4}$.
\end{theorem}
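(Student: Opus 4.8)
The plan is to exhibit an explicit proof-labeling scheme for $\ST_l$ using $O(\log n)$-bit certificates and then argue directly that it is error-sensitive. For the scheme itself I would reuse the folklore spanning-tree certificate: the prover picks a root $r$, and assigns to each node $u$ the pair $(\ID(r), d(u))$, where $d(u)$ is the hop-distance from $u$ to $r$ in the tree. The verifier checks locally that (i) the adjacency lists are consistent (i.e. $u \in \ell(v)$ iff $v \in \ell(u)$, and all listed neighbors are genuine neighbors in $G$), (ii) all nodes agree on $\ID(r)$, (iii) the claimed root has distance $0$ and is the unique node with $\ell$-degree behaving like a root, and (iv) every non-root node $u$ has exactly one neighbor $p$ in its adjacency list with $d(p) = d(u) - 1$, while all other tree-neighbors $w$ satisfy $d(w) = d(u) + 1$. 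These checks certify both acyclicity (via the strictly decreasing distance toward the root, as in Proposition~\ref{prop:acyclic}) and connectedness/spanning (via the common root ID), so $(\prover, \verif)$ is a correct proof-labeling scheme for $\ST_l$ with certificates of size $O(\log n)$.

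The heart of the matter is error-sensitivity, and here I expect to lean on the structural work already done. The cleanest route would be to invoke Lemma~\ref{lem:univ} together with Corollary~\ref{coro:ST-via-characterization}: since $\ST_l$ is strongly locally stable (hence locally stable) and admits an error-sensitive scheme, the universal scheme is error-sensitive. However, that only gives error-sensitivity for the $O(n^2)$-bit \emph{universal} scheme, not for the compact one. So the real task is to show that the specific $O(\log n)$-bit scheme above rejects at $\Omega(\mathsf{d})$ nodes. I would therefore argue directly: given an illegal $(G,\ell)$ with arbitrary (possibly adversarial) certificates, let $R$ be the set of rejecting nodes, with $|R| = k$; I must produce a legal labeling $\ell'$ with edit-distance $O(k)$ from $\ell$, which forces $\mathsf{d} \le O(k)$, i.e. $k \ge \Omega(\mathsf{d})$.

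To build $\ell'$ I would mimic the local-stability argument of Theorem~\ref{thm:characterization}: delete every edge both of whose endpoints reject, partition the graph into the resulting connected components, and observe that inside each component where some node accepts, the certificates are locally consistent, so the distance values $d(\cdot)$ and the root ID agree throughout the component. Within such a component the accept-conditions force the $\ell$-edges to form a genuine (partial) in-forest oriented toward decreasing $d$, hence acyclic and locally tree-like; the only defects — disconnection between components, possible cycles or missing edges — are confined to the boundary, i.e. to within one hop of the rejecting nodes. I would then repair the labeling exactly as in the proof of Corollary~\ref{coro:ST-via-characterization}: relabel the boundary nodes to restore consistency, remove the $O(k)$ offending edges to kill cycles, and add $O(k)$ edges to reconnect the $O(k)$ components into a single spanning tree, each operation touching only $O(1)$ node labels per rejecting node.

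The main obstacle, as the introduction flags, is that \textbf{the rejecting nodes need not coincide with the erroneous nodes}: the certificates decorrelate the two. So I cannot simply localize errors at rejecting nodes; instead the argument must be that the number of \emph{boundary} nodes of the consistency-components is controlled by $k$, and that every component of the edge-restricted graph carries a locally valid piece of a forest that can be stitched together by modifying only $O(k)$ labels. Making precise that each repair operation (edge deletion for cycles, edge addition for connectivity, boundary relabeling) can be charged to $O(1)$ rejecting nodes — and that the global structure across many components really does assemble into a spanning tree with total label-change $O(k)$ — is the delicate bookkeeping step, and it is essentially the same accounting already carried out in Corollary~\ref{coro:ST-via-characterization}, now re-expressed in terms of rejecting nodes rather than abstract boundaries. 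Once that accounting yields $\mathsf{d} \le O(k)$, error-sensitivity with a constant $\alpha > 0$ follows immediately.
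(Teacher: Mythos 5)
Your proposal is correct and follows essentially the same route as the paper: the folklore $O(\log n)$-bit scheme (root ID plus distance, with the paper also including a parent pointer), then deleting every edge whose two endpoints reject, observing that the surviving $\ell$-edges form a forest inside each connected component of the pruned graph, and repairing by stitching the pieces together at a cost of $O(1)$ label edits per rejecting node. The bookkeeping you defer is completed in the paper exactly along the lines you indicate, though it rests on the certificate semantics rather than on the boundary accounting of Corollary~\ref{coro:ST-via-characterization}: every tree of the forest has a root that either rejects or carries the component's unique root-ID (so at most $k_C+1$ trees per component), and every component contains at least one rejecting node (so at most $k$ components), giving at most $2k$ trees overall and hence edit-distance $O(k)$.
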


\begin{proof}
We show that the classical proof-labeling scheme $(\prover,\verif)$ for $\ST_{l}$ is error-sensitive. Let us first remind precisely what this scheme is.

On instances of the language, i.e., on labeled graphs $(G,\ell)$ where $\ell$ encodes a spanning tree $T$ of $G$, the prover $\prover$ does the following. 
It chooses an arbitrary root $r$ of $T$, and then assigns to every node~$u$ a certificate $(I(u),P(u),d(u))$ where $I(u)=\ID(r)$, $P(u)$ is the ID of the parent of $u$ in the tree (where we consider that the root is its own parent), and $d(u)$ is the hop-distance in the tree from $u$ to $r$.
The verifier $\verif$ at every node $u$ first checks that:
\begin{itemize} 
\item the adjacency lists are consistent, that is, if $u$ is in the list of $v$, then $v$ is in the list of $u$;
\item there exists a neighbor of $u$ with ID $P(u)$, we denote it $p(u)$;
\item the node $u$ has the same root-ID $I(u)$ as all its neighbors in $G$;
\item $d(u)\geq 0$. 
\end{itemize} 

\noindent Then, the verifier checks that:
\begin{itemize}
\item if $\ID(u)\neq I(u)$ then $d(p(u)) = d(u)-1$, and for every other neighbor $w$ listed in $\ell$, $d(w)=d(u)+1$ and $p(w)=u$; 
\item if $\ID(u)= I(u)$ then $P(u)=ID(u)$, $d(u)=0$, and every neighbor $w$ of $u$ listed in $\ell$ satisfies $d(w)=d(u)+1$ and $p(w)=u$. 
\end{itemize}

The scheme and later steps of the proof are illustrated in Figure~\ref{fig:ST-total}.
 
\begin{figure}[tb]
\begin{center}
\begin{tabular}{cc}
\begin{minipage}{0.5\textwidth}
\includegraphics[scale=.9]{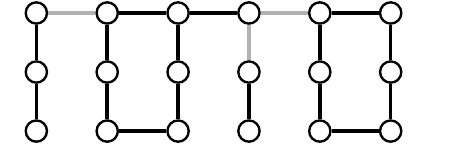}
\vspace{0.3cm}
\end{minipage}
&
\begin{minipage}{0.5\textwidth}
\includegraphics[scale=.9]{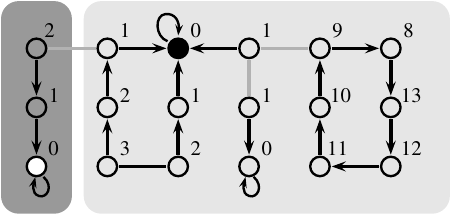}
\vspace{0.3cm}
\end{minipage}
\\
(a) Instance & (b) Certification 
\\
\\
\begin{minipage}{0.5\textwidth}
\includegraphics[scale=.9]{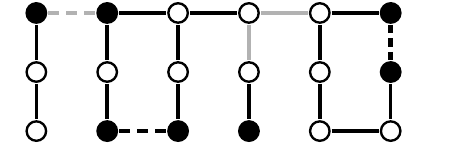}
\vspace{0.3cm}
\end{minipage}
&
\begin{minipage}{0.5\textwidth}
\includegraphics[scale=.9]{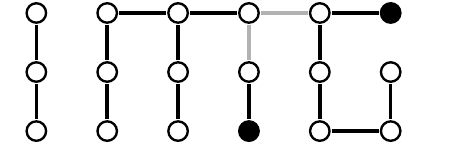}
\vspace{0.3cm}
\end{minipage}
\\
(c) Verifier output & (d)Graph $G'$
\end{tabular}
\end{center}
\caption{
\sl 
Construction in the proof of Theorem~\ref{theo:STlogn}.
(a) Illustration of an instance. The black edges are the edges described in $\ell$ (in this example, $\ell$ is a proper encoding of a set of edges), and
the gray edges are the other edges of the graph. Note that $(G,\ell)$ is not in the languages as the set of black edges in neither connected, nor~acyclic.
(b)~Illustration of a certificate assignment.
The depicted numbers are the hop-distances, and the arrows provide the orientation of the pointers. The nodes in the light grey area have the ID of the black node as their root-ID in their certificates.
The ones in the dark grey rectangle have the ID of the white node as their root-ID in their certificates.
(c) Illustration of the rejecting nodes and edges. The black nodes are the ones that reject, and the dashed edges have both endpoints rejecting. 
The two top left black nodes reject because they do not have the same root-ID. 
The ones linked by a dashed  edge at the bottom reject because they detect that the edge connecting them is in the input but is not oriented (that is, the condition ``$p(w)=u$'' does not hold). 
The third black node on the bottom rejects because it has a distance-counter equal to zero, and a pointer to itself, as if it was the root, but its ID is not the root-ID in its certificate. 
Finally, the endpoints of the dashed edge on the right reject because their distance counter are not consistent with the pointers. 
(d) The labeled graph $(G',\ell')$ obtained after removing the edges whose both endpoints reject. The black nodes are the ones that reject when we run again the verifier on this labeled  graph, with the same certificate.
}
\label{fig:ST-total}
\end{figure}

By construction, if $(G,\ell)\in \ST_{l}$, then $\verif$ accepts at every node. Conversely, if $(G,\ell)\notin \ST_{l}$, then, for every certificate function $c$,  at least one node rejects. 
To establish error-sensitivity for the above proof-labeling scheme, let us assume that $\verif$ rejects at $k\geq 1$ nodes with some certificate function $c$. 
Then, let $(G',\ell')$ be the labeled graph coinciding with $(G,\ell)$ except that all edges for which $\verif$ rejects at both endpoints are removed both from $G$, and from the adjacency lists in $\ell$ of the endpoints of these edges. 
Note that modifying $\ell$ into $\ell'$ only requires to edit labels of nodes that are rejecting.   
Note that the graph $G'$ might be disconnected. 
Also note that the edges described in the labeling that are still present in $(G',\ell')$ form a forest. 
(This is because the verification of the counters ensures that any cycle is cut by our procedure). 

 Let $(C,\ell'_C)$ be a connected component of $(G',\ell')$. 
We claim that the edges of $\ell'_C$ form a forest in $C$. 
 If there is a cycle in the edges of $\ell'_C$, then this cycle already existed in $\ell$ because  no edges were added when transforming $\ell$ into $\ell'$. Let us consider such a cycle in $\ell$ (if it exists), and let us consider the certificates given by $\prover$ to the nodes of this cycle. Either an edge is not oriented, i.e., no nodes use this edge to point to its parent, or the cycle is consistently oriented but some distances are not consistent. In both cases, two adjacent nodes of the cycle would reject when running $\verif$. It follows that this cycle cannot be present in $\ell'_C$, as at least one edge has been removed. As a consequence $\ell'_C$ form a forest of $C$. If a node is connected to no other nodes by an edge of $\ell'_C$, then we will consider such isolated node as a tree of a unique node. With this convention, $\ell'$ is a spanning forest of $G'$. 

We  now bound the number of trees in $\ell'$ by a function of $k$. The number of trees in $\ell'$ is equal to the sum of the number of trees in each component $(C,\ell'_C)$.  Let us run $\verif$ on the graph $(C,\ell'_C)$, and let $k_C$ be the number of rejecting nodes. 
Observe that, for every two nodes $u$ and $v$ in a component $C$, it holds that $I(u)=I(v)$. Indeed, otherwise, there would exist two adjacent nodes $u$ and $v$ in $C$ with $I(u)\neq I(v)$, resulting in $\verif$ rejecting at both nodes, which would yield the removal of $\{u,v\}$ from $G$. Consequently, at most one tree of $\ell'_C$ has a root whose ID corresponds to the ID given to the nodes in the certificate. 
Therefore, in every tree described in $\ell'_C$, except at most one, there exists at least one node that rejects (typically a vertex that is pointing to itself, but whose root-ID is different from its ID).
As a consequence, the number of trees in $\ell'_C$ is upper bounded by $k_C+1$, and the total number of trees is bounded by $\sum_C (k_C+1)=\left(\sum_C k_C\right) + |C|$.
Now, by construction of the proof-labeling scheme, the nodes that accept when running $\verif$ on $(G,\ell)$ also accept in $(G',\ell')$. Therefore $\sum_C k_C \leq k$. 

For every connected component $C$, let $V_C$ be the set of nodes of $C$. We claim that there exists a node of $V_C$ that rejects when we run $\verif$ on $(G,\ell)$. 
Suppose by contradiction that there exists a component where no node reject. Then no edges between $C$ and the rest of the graph would have been removed, and therefore there would be only one component in the graph. And then, as we know that at least one node rejects there would be a contradiction.  Therefore $|C|\leq k$. 
It follows that $\ell'$ encodes a spanning forest with at most $2k$ trees in total. Such a labeling can thus be modified to get a spanning tree by modifying the labels of at most $4k$ nodes. That is, $(\prover,\verif)$ is error-sensitive with parameter $\alpha\geq \frac14$. 
\end{proof}

Last, but not least, we show that the compact proof-labelling scheme in \cite{KormanK07,KormanKP10} for minimum-weight spanning tree (MST), as specified in Eq.~\eqref{eq:mstlist} of Section~\ref{sec:characterization}, is error-sensitive whenever the edge-weights are pairwise distinct.

\begin{theorem}\label{theo:MSTcompactES}
$\MST_l$ admits an error-sensitive proof-labeling scheme with certificates of size $O(\log^2n)$ bits, and sensitivity $\frac{1}{7}$.
\end{theorem}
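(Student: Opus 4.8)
The plan is to show that the classical $O(\log^2 n)$-bit proof-labeling scheme for MST, adapted to the adjacency-list encoding $\MST_l$, is error-sensitive. The natural strategy is to mimic the structure of the proof of Theorem~\ref{theo:STlogn}, but the MST scheme is considerably more involved, as it is built on top of $O(\log n)$ nested layers of spanning forests corresponding to the levels of Bor\r{u}vka's (equivalently, Kruskal-style fragment-merging) algorithm. Recall that the standard MST certificate equips each node, at each of the $O(\log n)$ levels, with information describing the fragment it belongs to (a fragment-ID, a distance to the fragment root, a pointer to the parent within the fragment tree, and the weight of the minimum outgoing edge leaving the fragment), so that a verifier can locally check both that the encoded subgraph is a spanning tree (using the $\ST_l$ verification of Theorem~\ref{theo:STlogn} at the top layer) and that it satisfies the minimality (cut/cycle) property enforced at each Bor\r{u}vka level. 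Each certificate has size $O(\log n)$ bits per level, over $O(\log n)$ levels, for a total of $O(\log^2 n)$ bits.

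First I would fix an arbitrary certificate function $c$ and suppose $\verif$ rejects at $k \geq 1$ nodes; the goal is to exhibit a labeling $\ell^\ast$ with $(G,\ell^\ast)\in\MST_l$ at edit-distance $O(k)$ from $(G,\ell)$. Following the $\ST_l$ argument, I would form $(G',\ell')$ by deleting every edge whose two endpoints both reject, noting that this modification only edits labels of already-rejecting nodes. The first and easier half of the analysis is the \emph{structural} part: exactly as in Theorem~\ref{theo:STlogn}, the surviving top-layer edges form a spanning forest whose number of trees is $O(k)$, because each connected component of $G'$ carries a consistent root-ID and contains at least one node rejecting in $(G,\ell)$, and cycles would force two adjacent rejecting nodes. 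This bounds by $O(k)$ the number of edge-insertions/deletions needed to restore a \emph{spanning tree} of $G$.

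The hard part — and the main obstacle — is the \emph{minimality} half: repairing connectivity is not enough, since the resulting spanning tree must be a genuine MST. The difficulty, flagged in the introduction of the excerpt, is that errors occurring at different Bor\r{u}vka levels must be charged to \emph{distinct} rejecting nodes, so that $k$ errors are detected by $\Omega(k)$ nodes rather than by only $O(k/\log n)$ of them; naively each violation could be witnessed at the same small set of nodes across all $\log n$ levels, destroying the linear bound. My plan is to appeal to Theorem~\ref{thm:characterization} in spirit by exploiting that $\MST_l$ is strongly locally stable (Corollary~\ref{cor:mstlocstable}): the Kruskal-ordering exchange argument there shows that once one has a spanning forest of $G$ in which every fragment is locally an MST-fragment consistent with the certificates, the edit-distance to a true MST is controlled by the number of ``boundary'' inconsistencies, which are precisely the rejecting nodes and the endpoints of deleted edges. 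Concretely, I would argue that each accepting node sits inside a locally-correct Bor\r{u}vka fragment at every level, so the accepting region of $(G',\ell')$ decomposes into pieces each of which is a valid MST-fragment; re-running the fragment-inclusion argument of Claim~\ref{claim:MST-inclusions} lets me glue these pieces into a global MST while modifying only $O(k)$ labels. The key technical care is ensuring that an inconsistency visible at many levels still forces a rejection \emph{somewhere}, and that the level at which a node first rejects can be used to charge errors injectively; I expect this charging argument — separating per-level violations and matching each to a distinct rejecting node — to require the bulk of the work, after which the final edit-distance bound $O(k)$ follows and yields sensitivity parameter $\alpha = \Omega(1)$.
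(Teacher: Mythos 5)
Your first half (deleting doubly-rejecting edges, and arguing that the surviving top-layer edges form a spanning forest with $O(k)$ trees, each component containing a rejecting node) matches the paper's proof and is sound. The genuine gap is in your minimality half. Appealing to Theorem~\ref{thm:characterization} together with Corollary~\ref{cor:mstlocstable} cannot give what you need: the ``if'' direction of the characterization establishes error-sensitivity only for the \emph{universal} scheme, and it does so precisely because an accepting node there carries a complete description $(M,L)$ of a legal instance, so each accepting component can be matched to a legal labeled graph and local stability applies. With $O(\log^2 n)$-bit certificates this is exactly what is unavailable: after the decomposition, your claim that ``the accepting region of $(G',\ell')$ decomposes into pieces each of which is a valid MST-fragment'' is unjustified, and in general false. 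The surviving certificates describe fragments whose claimed lightest outgoing edges may simply be missing (their witness trees were broken at rejecting nodes), so within a component the encoded forest is the outcome of a Bor\r{u}vka-like process in which fragments silently fail to merge; such a forest need not be an MST fragment of $C$, of $G$, or of any legal instance, and can a priori be very far from one. Claim~\ref{claim:MST-inclusions}, which compares Kruskal runs producing two \emph{genuine} MSTs sharing a subgraph $H$, has no purchase on these corrupted objects, so the gluing step you sketch has nothing to glue.

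What the paper does at this point is introduce new machinery rather than reuse Corollary~\ref{cor:mstlocstable}: it defines a relaxed ``lazy Bor\r{u}vka'' algorithm in which fragments carry arbitrary names and may skip turns, proves (Claim~\ref{claim:correct-run}) that the surviving labeling on each component $C$ is the outcome of a lazy run satisfying two structural invariants (successor-clusters stay clusters, and at most one fragment per cluster chooses an edge per round), and then gives an iterative four-case repair operation that converts this lazy run into a true Bor\r{u}vka run, adding one edge each time two fragments that should have merged are merged. The quantitative bound comes from observing that every fragment that skips has a rejecting root, so at most $k_C+1$ fragments survive in the final cluster and at most $k_C$ edges are added; the components are then reconnected into an MST of $G$ by re-inserting the deleted edges one by one and invoking the cycle property, at cost $2k$. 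Note also that the paper's resolution of the ``$k$ errors vs.\ $O(k/\log n)$ rejections'' concern you raise is not an injective per-level charging argument at all; it is this repair argument. That machinery is the bulk of the theorem, and it is precisely what your proposal leaves open.
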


The proof of Theorem \ref{theo:MSTcompactES} is quite technical. So, before entering into the details of the proof, let us provide the reader with some intuitions. A classic proof-labeling scheme for \MST\/ (see, e.g., \cite{KorKP11,KormanK07,KormanKP10}) consists in encoding a run of Bor\r{u}vka algorithm. Recall that Bor\r{u}vka algorithm maintains a spanning forest whose trees are also called fragments. Starting with the forest in which every node forms a fragment, Bor\r{u}vka algorithm proceeds in a sequence of steps where, at each step, the lightest  edge outgoing from every fragment of the current forest is selected, and is added to the \MST. The fragments linked by the selected edges are merged, and the algorithm goes to the next step. This algorithm eventually produces a single fragment, which is a \MST\/ of the whole graph, after at most a logarithmic number of steps. 

At each node $u$, the certificate of the classical proof-labeling scheme for \MST\/ consists of a table with a logarithmic number of fields, one for each step of Bor\r{u}vka algorithm. The corresponding entry of the table provides a proof of correctness for the fragment  including~$u$ at this step, plus the certificate of a tree pointing to the lightest outgoing edge of the fragment. The verifier verifies the structures of the fragments, and the fact that no  edges outgoing from each fragment have smaller weights than the one given in the certificate. It also checks that the different fields of the certificate are consistent. In particular, it checks that, if two adjacent nodes are in the same fragment at the same step, then they are also in the same fragment at the next step. 

To prove that this classic scheme is actually error-sensitive, we perform the same decomposition as in the proof of Theorem~\ref{theo:STlogn}, by removing the edges that have both endpoints rejecting. We then consider each connected component~$C$ of the remaining graph, and the subgraph $S$ of that component induced by the edges of the given labeling. In general, $S$ is not a \MST\/ of the component $C$, as it can even be disconnected. Nevertheless, we can still make use of a key property, which is that the subgraph~$S$ is not far from a \MST\/ of~$C$. Indeed, the edges of $S$ form a forest, and these edges belong to a \MST\/ of the component. As a consequence, it is sufficient to add a few edges to $S$ for obtaining a \MST. Thus, to show that $S$ is indeed not far from being a \MST\/ of $C$, we define a relaxed version of Bor\r{u}vka algorithm, and show that the labeling of the nodes corresponds to a proper run of this modified version of  Bor\r{u}vka algorithm. We then show how to slightly modify both the run of the modified  Bor\r{u}vka algorithm, and the labeling of the nodes, to get a \MST\/ of the component.  Finally, we prove that the collection of \MST{s} of the components can be transformed into a single \MST\/ of the whole graph, by editing  a few node labels only. 

The rest of the section is dedicated to formalizing the above intuition. 

\medbreak

\noindent\textbf{Proof of Theorem~\ref{theo:MSTcompactES}. \;} 
We show that the proof-labeling scheme for MST described in~\cite{KormanK07,KormanKP10} is error-sensitive. Let $G$ be an edge-weighted graph. For simplicity we assume that all the edge-weights of $G$ are distinct, and thus the MST is unique.  It is now folklore (see, e.g.,~\cite{Peleg00}) that  one can run a parallel version of Bor\r{u}vka algorithm which proceeds in at most $\lceil \log_2 n \rceil $ rounds, where each round consists in merging fragments in parallel. Note that a merging may involve more than just two fragments during a single round, so the number of fragments may actually decrease faster than by a factor~2 at each round. 

\medbreak

\noindent\emph{The standard proof-labeling scheme for \MST.} 
Recall that, in the proof-labeling scheme of~\cite{KormanK07,KormanKP10}, the prover $\prover$ essentially encodes at each node the run of the parallel version of Bor\r{u}vka algorithm. More specifically, the certificate at each node $u$ is divided into $\lceil\log_2 n\rceil+1$ fields, one for each round $i=1,\dots,\lceil \log_2 n \rceil $, plus an additional one. The field corresponding to round~$i$ in the certificate of a node $u$ contains: 
\begin{enumerate}
\item a pointer to a parent of $u$ in a rooted tree $T_1$ that is supposed to span the fragment including $u$ at round~$i$, rooted at an arbitrary node of the fragment, whose ID is the ID of the fragment, and the local proof used at $u$ to certify that $T_1$ is indeed a spanning tree of the fragment;  
\item a pointer to a parent of $u$ in another rooted tree $T_2$, also spanning the fragment, but rooted at the endpoint of the lightest edge $e$ outgoing the fragment, with the local proof at $u$ certifying $T_2$; 
\item the ID of the other endpoint of the edge $e$, and its weight. 
\end{enumerate}
The tree $T_1$  ensures the connectivity of the fragment, and is used in the merging procedure of Bor\r{u}vka algorithm, while $T_2$ is used to make sure that the edge $e$ is indeed the edge of minimum weight incident to the fragment. The additional field is used to encode and certify locally  the \MST\/ of the whole network.

The verifier $\verif$ checks that, for each round, the two spanning trees $T_1$ and $T_2$ of the fragment are correct. It also checks that the run is consistent, that is:
\begin{itemize}
\item two adjacent nodes with same fragment ID at some round have the same fragment ID and the same lightest outgoing edges for all further rounds; 
\item  if an edge is used to merge two fragments at some round, then its endpoints belong to the same fragment for all remaining  rounds;  
\item  if a spanning tree is pointing to an edge, then this edge exists, and it is used to merge the fragment with another fragment; 
\item the final spanning tree has exactly the edges described by the given labeling, and it correctly spans the whole graph,  i.e.,  all the nodes have the same root-ID for this tree. 
\end{itemize} 

It is proved in~\cite{KormanK07,KormanKP10} that $(\prover,\verif)$ is a proof-labeling scheme for $\MST$. We show that it is error-sensitive. 

\medbreak

\noindent\emph{Edge deletion.} Let us fix some some certificate function $c$, and let us assume that $k\geq 1$ nodes reject with certificate~$c$. We perform the same decomposition as in the proof of Theorem~\ref{theo:STlogn}, removing from $G$ and $\ell$ the edges whose two extremities are rejecting. 
We obtain a labeled graph $(G',\ell')$. Let $C$ be a connected component of $G'$, and let us run the verifier on $(C,\ell'_C)$ with the same certificate function. Let $k_C$ be the number of rejecting nodes in $C$.
As argued in the proof of Theorem~\ref{theo:STlogn}, the number of rejecting nodes in the whole graph can only decrease from $(G,\ell)$ to $(G',\ell')$. Therefore, $\sum_C k_C \leq k$. 

Let us consider a node that is rejecting in $(C,\ell'_C)$. We claim that the only cases for which a node rejects are (1)~it is not a root in one of the trees encoded in the certificates, but there are inconsistencies with its parent (i.e., no parent, or incorrect root-ID, or incorrect distance counter), or (2)~it is the root in one of the trees encoded in the certificates, but it is not incident to the edge announced in the certificate. This is because, using the same line of arguments as the proof of Theorem~\ref{theo:STlogn}, if another case of rejection would exist, then there would be an edge whose both endpoints reject, but such an edge cannot exist in $(C,\ell_C)$, by construction (these edges have precisely be removed when doing the decomposition).

\medbreak

\noindent\emph{Lazy Bor\r{u}vka.} We will now consider a relaxed version of Bor\r{u}vka algorithm that we call ``lazy Bor\r{u}vka''. As the classical version of Bor\r{u}vka, the lazy variant grows a forest of fragments. Initially, there is one fragment per node. At each round, lazy Bor\r{u}vka proceeds in three steps. First it picks an arbitrary \emph{name} for each fragment. Second, for each fragment~$F$, it considers all edges connecting $F$ to a fragment with different name, and either chooses the incident edge with smallest weight, or do not choose any edge, in which case we say that $F$ is \emph{skipping its turn}. Third, Lazy Bor\r{u}vka merges the fragments that are linked by edges selected during the second step.  The algorithm stops whenever all the fragments have the same name. 

Note that in general lazy Bor\r{u}vka does not produce an MST, and it may even not terminate. However, if the names assigned to adjacent fragments are distinct at each round, and if there is no round~$i$ such that all fragments skip their turn at every round $j\geq i$, then lazy Bor\r{u}vka eventually produces an MST. 

Given a fragment $F$, we refer to all fragments including $F$ during the further rounds of lazy Bor\r{u}vka as its \emph{successors}. The fragments of the previous rounds contained in $F$ called \emph{predecessors}. Also, a maximal set of adjacent fragments having the same name during a same round is called a \emph{cluster}. 

We restrict our attention to the runs of lazy Bor\r{u}vka satisfying the following two properties:

\begin{description}
\item[P1] 
If some fragments form a cluster, then all their successors will also be part of a same cluster, but they will remain in different fragments. 

\item[P2] 
At every round of the run, at most one fragment per cluster chooses an edge, and all the other skip their turn.
\end{description}

\noindent We  show that $\ell_C'$ corresponds to the outcome of a run of lazy Bor\r{u}vka satisfying the above two properties.

\begin{claim}\label{claim:correct-run}
The labeling $\ell_C'$ is the outcome of a correct run $R$ of lazy Bor\r{u}vka on $C$, and this run satisfies the properties P1 and P2.  
\end{claim}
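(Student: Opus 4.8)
The plan is to build the run $R$ explicitly from the certificate carried by $\ell'_C$, and then to read off the two properties from its structure. The first thing I would exploit is that, after the decomposition step, every consistency condition that the verifier checks between two adjacent nodes holds across \emph{every} edge of $C$: a violation of condition~(1) or~(2) would have made both endpoints reject and the connecting edge would have been removed. So, for each round $i$, I extract from the certificate the fragment-ID function (from the first fragment tree) and the announced lightest outgoing edge (from the second tree). I then define the round-$i$ fragments of $R$ to be the connected pieces, inside $C$, obtained from the round-$i$ fragment trees after the missing edges have been deleted; a single honest fragment may be split into several such pieces. Each piece is given, as its \emph{name}, the fragment-ID stored in its certificate, so that all pieces coming from one honest fragment share a name and hence form a cluster. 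Finally, a piece is declared to \emph{choose} the announced lightest outgoing edge exactly when it contains the root pointing to that edge and that edge is present in $C$; otherwise it skips its turn.

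Next I would verify that $R$ is a legal run of lazy Bor\r{u}vka whose outcome is $\ell'_C$. The name is constant on each piece, because inside a piece the fragment tree is intact and the verifier forces a common root-ID, and it is well defined across pieces by condition~(1). A chosen edge joins pieces with \emph{different} names: the lightest outgoing edge of an honest fragment leaves that fragment, so its far endpoint lies in a different honest fragment and carries a different fragment-ID, which is exactly what condition~(3) certifies. That the merges are consistent with the round-$(i+1)$ fragment-IDs, and that the edges accumulated over all rounds are precisely those of $\ell'_C$, follow from the consistency conditions~(2)--(4). Moreover, the rejecting nodes of types (a) and (b) are exactly the orphan roots of the extra pieces and the roots whose announced edge is absent, so they match, one-to-one, the pieces that skip their turn.

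For property~\ref{property:at-most-one} I would use that, with distinct weights, each honest fragment has a \emph{unique} lightest outgoing edge, incident to a single node and therefore to a single piece. Only that piece can contain the root pointing to the edge, so at most one piece per cluster chooses an edge, and all the others skip.

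The hard part will be property~\ref{property:succ}: I must show that the pieces of one honest fragment, which form a cluster, never fuse into a single fragment of $R$, while staying inside one common cluster. Keeping them in one cluster is easy, since they retain a common name by condition~(1) and remain mutually adjacent because merging only enlarges fragments; the delicate point is ruling out an \emph{indirect} fusion, through a chain of chosen edges passing through other fragments. The key tool is again the uniqueness of lightest outgoing edges: if two pieces $P$ and $P'$ of the same honest fragment $x$ were linked by a path of chosen edges within a single round, that path would exhibit two distinct outgoing edges of $x$ being selected at once, contradicting the mutual-lightest-edge property of Bor\r{u}vka. I would make this precise by an induction on the rounds, carrying both properties~\ref{property:succ} and~\ref{property:at-most-one} as the induction hypothesis, so that the cluster structure established at round $i$ is available to control the merges performed at round $i+1$.
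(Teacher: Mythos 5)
Your construction of the run $R$ is the same as the paper's: fragments are the connected pieces of the certificate trees inside $C$, names are the certified root-IDs, and a piece skips whenever its orphan root lacks the correct root-ID or the announced edge. Where you diverge is in how the two properties are established. The paper derives both purely from the post-decomposition consistency of the certificates: within a cluster, all nodes must announce the same lightest outgoing edge (otherwise the edge joining two disagreeing nodes would have been removed), so at most one piece --- the one containing the designated endpoint --- can choose, which gives property~\ref{property:at-most-one}; and an edge described as merging two fragments with the same root-ID cannot survive the decomposition, which is the paper's (rather terse) argument for the no-fusion half of property~\ref{property:succ}. You instead lean on the distinct edge weights and the lightest-outgoing-edge structure. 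This buys you nothing extra for property~\ref{property:at-most-one}, but it genuinely matters for property~\ref{property:succ}: the paper's same-root-ID argument only rules out a \emph{direct} merge of two same-name pieces, whereas a fusion could a priori occur through a chain of chosen edges passing through fragments with other names; your heaviest-edge-on-the-chain argument (every chain edge incident to a fragment is outgoing from that fragment's cluster, hence at least as heavy as the cluster's announced minimum, with property~\ref{property:at-most-one} carried in the induction to handle the chain's endpoints) is exactly what closes this case, and it does require the weight checks, not just conditions (1)--(2). In that respect your treatment is more complete than the paper's.

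Two points need tightening. First, you identify clusters with ``honest fragments'', but the certificates are adversarial: two disjoint certificate trees may carry the same root-ID and have adjacent pieces (one cluster, two honest fragments), and conversely the pieces of one certificate tree need not be mutually adjacent (one honest fragment, several clusters). So ``each honest fragment has a unique lightest outgoing edge'' is not quite the statement you need; what you need is that all pieces of one \emph{cluster} announce the same edge, and that follows from the verifier's consistency check holding across every edge of $C$ (which you invoked in general form), not from the weights. Second, a correct run of lazy Bor\r{u}vka must end with all fragments sharing one name, and you never verify termination. The paper gets it from the last certificate field (the final spanning tree, whose root-ID must agree across every surviving edge of $C$), together with a truncation step --- itself relying on property~\ref{property:succ} --- that cuts away trailing rounds in which every fragment skips. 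Both repairs are short, but they are needed for the claim as stated.
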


\noindent \textit{Proof of the claim.} 
Let us consider an execution of  lazy Bor\r{u}vka where fragments are as described by the certificates, and the names of the fragments are the root-ID of the corresponding fragments. 
As we argued before, these fragments are well-defined, that is, they are trees with correct proof, and the same root-ID at every node. 
Moreover these fragments are consistent from any round to the next one, because they satisfy the consistency properties checked by the verifier.
The fact that the root-ID may not be the ID of the root of the tree is not a problem, as it corresponds to a name. Finally, recall that if a node of $C$ rejects when checking round $i$, this is because that node has no parent in a tree encoded in the certificate, and either it does not have the appropriate root ID, or it is not incident to the appropriate edge. In both cases, there are no outgoing edges corresponding to that fragment for round~$i$. We interpret this fact as the fact that this fragment skipped its turn at this round. 
It follows that the different steps are valid for lazy Bor\r{u}vka, and they correspond to~$\ell_C'$. 

We now prove that the run has property P1 and P2. 

For establishing P1, let us assume, for the purpose of contradiction, that, at some round in $R$, two adjacent fragments $F_1$ and $F_2$ have the same name, but two successors $F'_1$ and $F'_2$ have different names. If this is the case, then, when verifying the certificates, both endpoints of an edge $e$ connecting $F_1$ to $F_2$ reject. Indeed, the certificates describe this run, and the verifier checks that the rounds are consistent. There are no such edge $e$ in $C$ by construction of $G'$, thus this situation does not occur. Also, if the two successors $F'_1$ and $F'_2$ are identical, then, at some round, the certificates tell that a fragment is taking an edge to a fragment that has the same root-ID, which is impossible (as such an edge would have been removed when creating $G'$). These arguments generalize to clusters, by connectivity. 

For establishing P2, let us assume that, at some round $i$, two fragments of a cluster choose an edge. It means that in the certificates of this run, there were two fragments with correct spanning trees pointing to these edges. 
As the verifier checks that two adjacent nodes with the same root-ID have the same outgoing edge, this implies that either the outgoing edge~$e$ was the same in the certificates of the two fragments, or this edge was different for the two fragments. In the former case, at least one of the two fragments will take no edge because it will detect that it does not have the edge~$e$. In the latter case, all the edges between these fragments would have both endpoints rejecting, and then they would not be adjacent as these edges would have been removed. Again, these arguments generalize to cluster, by connectivity. 

Finally, the termination of the run is also correct with respect to the specification of lazy Bor\r{u}vka. 
This is because of two facts. First, in the certificate, the last field describes a tree that has the same root-ID at every node, and the verifier checks this. Thus this holds after the decomposition step. The run stops at a round where all the fragments have the same name. 
Second, suppose there was a round~$i$ before the last round described by the certificate at which all the fragments had the same name. 
Then thanks to P1, at round~$i$, the fragments were exactly the same as in the last round, and every node has skipped all the remaining rounds. Thus we can consider round~$i$ if it was the last round. It still holds that the run corresponds to $\ell_C$, and has property P1 and P2. This completes the proof of Claim~\ref{claim:correct-run}.  \hfill $\diamond$

\medskip 

\noindent\emph{Getting closer to an MST.} 
In general, whenever it terminates, lazy Bor\r{u}vka can produce a forest which is arbitrarily far from being an MST. Nevertheless, we show that, as the run $R$ satisfies the properties P1 and P2, the forest produced by this run is at distance at most $O(k_C)$ from an MST of $C$, where $k_C$ is the number of rejecting node in $C$. To do so, we now modify the run $R$, by applying iteratively an operation on the run, adding edges to $\ell_C'$. This addition of edges is repeated until one obtains a run where, at every round, not two adjacent fragments have the same name, that is, until one obtains a run that builds an MST. 

We now describe the operation that we apply to a run, and the labeling associated with the resulting run. Consider the first round for which there is a cluster with more than one fragment. Let $K$ be such a cluster. There are only a few cases to consider.

\begin{itemize}
\item Case 1: none of the fragments in $K$ is choosing an edge, although there are fragments with names different from the one of $K$ that are adjacent to $K$. In this case, for this round, we assign new distinct names to all of the fragments in $K$, making sure that these names are not already used at that round by other fragments --- that is, we use ``fresh'' names. 

\item Case 2:  one of the fragments of $K$ is choosing an edge that has minimum weight among all edges that connect this fragment to the other fragments of~$C$, including the fragments of~$K$. In this case, we replace the names of the other fragments of $K$ by fresh names. 

\item Case 3:  a fragment $F$ of the cluster $K$ is choosing an edge $e$, although the lightest edge outgoing from $F$ is an edge $e'$ that connects it to a fragment $F'$ of $K$. In this case,  we add a round between round $i-1$ and round~$i$ where all fragments of~$C$ are given distinct names, and every fragment is skipping its turn, except $F$, which is choosing the edge $e'$. Also we add this edge $e'$ to the labeling. 

\item Case 4: round $i$ is the last round. In this case, we have only one cluster $K$ containing all the remaining fragments. We consider the lightest edge $e$ connecting two fragments in $K$, and add a round between round $i-1$ and round~$i$ where all fragments of~$C$ are given distinct names, and every fragment is skipping its turn except~$F$, which is choosing the edge~$e$. Also we add this edge~$e$ to the labeling. 
\end{itemize}

\begin{claim}
\label{claim:P1P2preservedafteroperation}
If a correct run of lazy Bor\r{u}vka satisfies property P1 and P2, and corresponds to the current labeling, then the above operations preserve these properties.
\end{claim}

\noindent\textit{Proof of the claim.} 
Let us consider a correct run of lazy Bor\r{u}vka satisfyng property P1 and P2, and that corresponds to the current labeling. We consider the four cases of the operation, and establish the claim for each of them. 
\begin{itemize}
\item Case 1. The run is still correct after the renaming because the fragments of $K$ were skipping, from which it follows that the modification of the names does not affect their behavior. The behavior of the other fragments is still the same because we have chosen fresh names. In particular if, at round~$i$, a fragment $F\notin K$ chooses an edge to a fragment $F'\in K$, then this action is still valid as $F$ and $F'$ still have different names.
P2 and the outcome of the run are still correct as we have not changed the fact that the nodes are skipping, and we have not modified the labeling.
Finally, P1 holds because we have considered the first round with a cluster containing more than one fragment, so the predecessors of the fragments of $K$ had different names at the previous rounds. 

\item Case 2. The same line of reasoning as in Case~1 holds. That is,  the behavior is unchanged, the change of name does not affect the correctness of the actions of neither the fragments of $K$, nor the ones outside $K$, and P1 holds because we consider the first round.

\item Case 3. Let us consider first the round that we have added. The fragment $F$ chooses the lightest edge to a fragment that has a different name, because we have chosen different names for all the fragments. Therefore, this round is correct for lazy Bor\r{u}vka algorithm. Now we have to check that the next rounds are also correct. Merging two fragments, we may cause several problems. 
First, the name of this fragment could be incorrectly defined, as the names of the successors of these fragments can be different. However, this cannot be the case because P1 holds in the run before the modification. 
Second, the merged fragment could take two edges at the same round, one taken by the successor of $F$ before the operation, and one taken by the successor of $F'$ before the operation. In fact, this cannot happen, because of P2. Finally the behavior of the other fragments is unchanged as they only consider the names of their adjacent fragments, and these names have not beed modified. Therefore  the run is still correct after the operation. Moreover, we have added a new edge in the labeling, thus the run still describes the labeling at hand. 
Property P1 and P2 still hold for the added round, and they also hold for the next rounds, as we have just merged two fragments of the same cluster, which implies that the names remain unchanged, and if two fragments of a cluster were now choosing an edge at the same round, then this also happened in the run before the operation, which a contradiction. 

\item Case 4. The same reasoning as for the previous case also holds in this case. 
\end{itemize}

\noindent This case analysis completes the proof of Claim~\ref{claim:P1P2preservedafteroperation}. \hfill $\Diamond$

\bigskip 

Thanks to Claim \ref{claim:correct-run}, the labeling $\ell_C'$ is the outcome of a correct run $R$ of lazy Bor\r{u}vka on $C$, and this run satisfies the properties P1 and P2. Therefore we can apply the operation on it. We claim that we can iterate this operation, and eventually get a run $R$ in which there are no clusters with more than just one fragment, after a finite number of iterations. 

To see why, let us first prove that, after an iteration for which we have used Case~3 or Case~4, the number of fragments in the final cluster has decreased by one. Let us consider the two fragments that we have merged during the operation. In the run before the operation, these two fragments had successors that were never merged, because of P1. It follows that these successors were distinct fragments at the end. Now that we have merged them, they form only one fragment at the end. As the behavior of the other fragments is not affected by the change, the number of fragment at the end has therefore indeed decreased by one. 

Let us now prove that for Cases 1 and 2, the sum, over the rounds, of the number of clusters with more than one fragment has strictly decreased. This is because we have scattered such a cluster in both Case~1 and Case~2, without creating new ones. 
Also, the number of fragments in the final cluster remains unchanged. 
Therefore, at every step, either the number of fragment in the final cluster has decreased by one, or it remains unchanged but the sum, over the runs, of the number of clusters with more than one fragment has strictly decreased. The operation can be repeated for a  finite amount of time. 
After all these operations, the run is such that, at every round, no two adjacent fragments have the same name. Therefore, the modified labeling $\ell_C'$ is a spanning tree of $C$.

Overall, we have added exactly one edge every time we have decreased by one the number of fragments in the final cluster. Thus, the number of edges added is equal to the number of fragments in the final cluster in the original run $R$, minus one. This number is at most $k_C$. Indeed, at most one fragment contains no rejecting nodes, since only one fragment can have the node whose ID was used as the root-ID in the certificates, and all the roots of the other fragments reject, with $k_C$ rejecting nodes in total.
Therefore the distance (i.e., the number of modified edges) between the original labeling~$\ell_C'$ and the modified labeling~$\ell'_C$, which is a correct spanning tree of $C$, is at most $k_C$. 
As the same reasoning holds for every connected component,  by defining a ``spanning'' tree of a disconnected graph as the union of   trees spanning each of its connected component, we get that the modified labeling $\ell'=\cup_C \ell'C$ is the spanning tree of $G'$, and it is at distance at most $\sum_C k_C \leq k$ from the original labeling $\ell'=\cup_C \ell_C'$. 

\medskip 

\noindent\textit{Comparison with the original spanning tree.} 
We now compare the modified labeling $\ell'$ with the spanning tree of the original graph $G$.
We claim that the set of edges described by $\ell'$ can be transformed into a spanning tree of $G$ by adding or removing at most $2k$ edges. Indeed, recall that, to go from $G$ to $G'$, we have  removed only the edges that were between two rejecting nodes. Let us call $S$ this set of edges. 
Let us go backwards, and let us remove edges from $G$ to go to $G'$ while keeping track of the spanning tree. 
Among the edges of $S$, at most $k-1$ can be part of the MST of $G$, as otherwise there would be a cycle since there are only $k$ rejecting nodes. Removing the other edges from $G$ does not change the MST. Let $G^1$ be $G$ without these edges, and let us also remove them from $S$. 

Now, let us consider one of the remaining edges in $S$, denoted by $e=\{u,v\}$. Let $G^2$ be the graph $G^1$ without this edge~$e$. 
If removing~$e$ disconnects the graph, then the spanning tree of $G^2$ is the same as the one of $G^1$, without~$e$. 
If removing~$e$ does not disconnect the graph, then we define~$e'$ as the edge of smallest weight in the cut between the nodes that are closer to~$u$ in the tree, and the ones that are closer to~$v$ in the tree. The resulting spanning tree is minimum. To see why, let us check that, for every cycle in the graph, the heaviest edge is not part of the spanning tree. The only cycles that we have to consider are the ones that contain~$e'$. Suppose that the edge~$e'$ is the heaviest of a cycle in $G^2$. This cycle must cross the cut via another edge, and this edge must have a smaller weight, as otherwise $e'$ would not be the heaviest. This contradicts the definition of $e'$, and thus, by adding~$e'$, we obtain a spanning tree of $G^2$. We can repeat this construction until there are no more edges in $S$. At the end, the graph $G^{k}=G'$, and the spanning tree of $G'$ is the modified $\ell'$. We have therefore added or removed at most $2k$ edges.

To conclude, in the first step from $(G,\ell)$ to $(G', \ell')$, we have edited only the labels of the rejecting nodes, thus $k$ labels at most. Then, we got from each labeling $\ell_C'$ to the final labeling $\ell_C'$ by adding at most $k_C$ edges. Thus, in the whole graph, we have modified at most $2k$ labels in total. Finally,  we have added or removed at most $2k$ additional edges, and thus we have modified at most $4k$ labels. Thus, overall, we have edited at most $7k$ labels. It follows that the distance of $(G,\ell)$  to MST is at most linear in the number of rejecting nodes, from which it follows that the proof-labeling scheme is error-sensitive. The sensitivity is 1/7.
This completes the proof of Theorem~\ref{theo:MSTcompactES} \qed

\section{Conclusion and perspectives}
\label{sec:conclusion}

In this paper, we have considered on a  variant of proof-labeling scheme, named \emph{error-sensitive} proof-labeling scheme. We have provided a structural characterization of the distributed languages that can be verified using such a scheme. This characterization highlights the fact that some basic network properties do \emph{not} have error-sensitive proof-labeling schemes, which is in contrast to the fact that every network property has a proof-labeling scheme. However,  important network properties, like cycle-freeness, leader, spanning tree, MST, etc., do admit error-sensitive proof-labeling schemes. Moreover, these schemes can be designed with the same certificate size as the one for the classic proof-labeling schemes for these properties.

Our study of error-sensitive proof-labeling schemes raises intriguing questions. 
In this paper, we have considered two scenarios  only for a given language: either the language does not admit error-sensitive proof-labeling schemes, or it admits an error-sensitive proof-labeling scheme with the same certificate size as the optimal proof-labeling schemes. 

\begin{open}
Is it the case that, for every language, either the language is not error-sensitive, or it is error-sensitive at no cost (that is, the optimal certificate size is the same with and without the error-sensitivity requirement)? 
\end{open}

Another interesting topic is the sensitivity parameter. In this paper, for a given language, we have been interested in the existence of $\alpha$-sensitive scheme for some constant $\alpha$, but we have not tried to optimize~$\alpha$. It would be interesting to study whether we can optimize $\alpha$ beyond the values derived  in this paper.

\begin{open}
In Theorem~\ref{theo:STlogn} and~\ref{theo:MSTcompactES}, we derive error-sensitive schemes with  optimal certificate size, with sensitivity parameters $1/4$ and $1/7$, for spanning tree and minimum spanning tree, respectively. Are these sensitivity parameters optimal?
\end{open}

The two questions above are actually related. 
Indeed, a larger sensibility parameter implies more constraints on the certification. Therefore, in general, certificate size grows with the required sensitivity. Nevertheless, we do not know which way the certificate size grows. 

\begin{open}
For a given language, is it the case that there is a threshold value for the sensitivity parameter (that could be 0), such that, below this threshold, one gets error-sensitivity at no cost in term of certificate size, while, above this threshold, error-sensitivity is impossible? Instead, is there a smooth trade-off between the sensitivity and the certificate size? 
\end{open}

Another desirable property for a proof-labeling scheme is \emph{proximity-sensitivity}, requiring that every error is detected by a node  located closely to the location of the error. Proximity-sensitivity however appears to be a very demanding notion, stronger than error-sensitivity. Indeed, the former implies the latter whenever the errors are spread out in the network at sufficiently large mutual distances. 

\begin{open}
Under which circumstances it is possible to design proximity-sensitive proof-labeling schemes? Is it possible to provide a  simple characterization of the languages admitting   proximity-sensitive proof-labeling schemes?
\end{open} 

Finally, let us mention error-sensitivity for network properties. 
This paper has been motivated by self-stabilizing algorithms, for which mechanisms used to locally certify the correctness of global states of the system are required. 
More recently, there has been a new direction explored in local certification, which consists in designing proof-labeling schemes for properties of the network itself. 
For example, \cite{FeuilloleyFMRRT21, FeuilloleyFMRRT20, EsperetL21} design compact proof-labeling schemes for planar and bounded-genus graphs. 
In such setting, our notion of error-sensitivity is not relevant, as the distance is about the number of edits at the vertices, whereas for network properties, one should consider edits on the graph itself.
A natural question is:

\begin{open}
Can we generalize error-sensitivity to network properties? If yes, is planarity an error-sensitive property?
\end{open} 

\paragraph*{Acknowledgments.} We thank the reviewers of the conference and journal versions of this paper, for their careful reading and useful suggestions.

\bibliographystyle{plain}
\bibliography{bibliography.bib}{}

\end{document}